\newcommand{\Eta}{\Tilde{\eta}}
\newcommand{\R}{\mathbb{R}}
\newcommand{\BM}{\begin{bmatrix}}
\newcommand{\EM}{\end{bmatrix}}
\newcommand{\be}{\begin{equation}\begin{aligned}}
\newcommand{\ee}{\end{aligned}\end{equation}}
\newtheorem{theorem}{Theorem}
\newtheorem{corollary}{Corollary}
\newtheorem{remark}{Remark}
\begin{document}

\begin{frontmatter}
%\runtitle{Insert a suggested running title}  % Running title for regular 
                                              % papers but only if the title  
                                              % is over 5 words. Running title 
                                              % is not shown in output.

\title{Learning Invariant Subspaces of Koopman Operators--Part 2: Heterogeneous Dictionary Mixing to Approximate Subspace Invariance} % Title, preferably not more 
                                                % than 10 words.

\thanks[footnoteinfo]{ Corresponding Author at: The Biological Control Laboratory at UC Santa Barbara, Santa Barbara, California 93106}

\author[ey]{Charles A. Johnson}\ead{cajohnson@ucsb.edu}$^{,*}$,    % Add the 
\author[sh]{Shara Balakrishnan}\ead{sbalakrishnan@ucsb.edu}, 
\author[ey]{Enoch Yeung}\ead{eyeung@ucsb.edu}               % e-mail address 

%\address[caj]{Department of Mechanical Engineering, University of California, Santa Barbara, 93106, United States}  % Please supply                                              
\address[ey]{Department of Mechanical Engineering, University of California, Santa Barbara, 93106, United States}     

\address[sh]{Department of Electrical and Computer Engineering, University of California, Santa Barbara, 93106, United States}% full addresses

\begin{keyword}          % Five to ten keywords,  
Koopman operator; deep dynamic mode decomposition; identification methods; subspace approximation; invariant subspaces; nonlinear system identification; semigroup and operator theory;  neural networks; modeling and identification.       % chosen from the IFAC 
\end{keyword}         % keyword list 

\begin{abstract}  This work builds on the models and concepts presented in \cite{johnson2022SILL} to learn approximate dictionary representations of Koopman operators from data. Part I of this paper presented a methodology for arguing the subspace invariance of a Koopman dictionary. This methodology was demonstrated on the state-inclusive logistic lifting (SILL) basis. This is an affine basis augmented with conjunctive logistic functions. The SILL dictionary's nonlinear functions are homogeneous, a norm in data-driven dictionary learning of Koopman operators.  In this paper, we discover that structured mixing of heterogeneous dictionary functions drawn from different classes of nonlinear functions achieve the same accuracy and dimensional scaling as the deep-learning-based deepDMD algorithm \cite{yeung2019learning}. We specifically show this by building a heterogeneous dictionary comprised of SILL functions and conjunctive radial basis functions (RBFs). This mixed dictionary achieves the same accuracy and dimensional scaling as deepDMD with an order of magnitude reduction in parameters, while maintaining geometric interpretability. These results strengthen the viability of dictionary-based Koopman models to solving high-dimensional nonlinear learning problems. 
\end{abstract}

\end{frontmatter}

\section{Introduction}

%\shara{The narration of a story you adopt in the introduction is good and slowly transitioning to the problem but it takes too long to get to the problem. To get the attention of a reader, it is usually not the best approach. There is book called \textbf{Start with why} which really helps me in formulating my writing thoughts. My approach based on it is 
%\begin{itemize}
%    \item What is the overall problem and why it needs to be solved? Address the pain point. 
%    \item What is the precise problem that you are addressing?
%    \item How are you addressing it?
%\end{itemize}
%}

%\shara{The intro is starting with a definition. It might be good to start with a killer opener on some impact of what Koopman operator stuff is doing.}

Increasing levels of sophistication are required to build large-scale, data-driven models of complicated, nonlinear dynamic processes.   Koopman operator theory provides an alluring, though nuanced, approach to building such models \cite{rowley2009spectral, schmid2010dynamic, mezic2013analysis,proctor2018generalizing,hasnaindata,hasnain2022learning,mauroy2020Koopman,budivsic2012applied, williams2014kernel,johnson2022SILL}.

To build universally applicable models, as the scale of systems increase, model classes need to capture as many dynamics as possible in the smallest number of parameters and model dimensions. This requirement has restricted the viability of dictionary-based Koopman models such as the extended dynamic mode decomposition (EDMD \cite{williams2015data}) algorithm. Such models become unwieldy because human defined dictionaries with nonlinearities placed in an evenly spaced grid scale, in dimension, exponentially with the dimension of the measurement space. However, Koopman models, in theory, do not need their parameters to scale exponentially with the dimension of the measurements. 

Model dictionaries that satisfy {\em uniform finite approximate closure}, as defined in \cite{johnson2022SILL}, have been show to scale more favorably \cite{johnson2018class}. We have also seen empirical results that suggest deep-learning based models scale much better with measurement dimension as well\cite{yeung2019learning, wehmeyer2018time, lusch2018deep, otto2019linearly,takeishi2017learning}.

Traditionally, dictionary based methods for building Koopman models focus on homogeneous dictionaries. In such a dictionary, each of the nonlinear candidate dictionary functions comes from a specific function basis, such as a polynomial or Fourier basis \cite{williams2016extending}. The SILL dictionary from Part 1 of this paper is one such homogeneous dictionary. Each nonlinear function in the basis is a conjunctive logistic function.

Part 1 of this paper demonstrates the {\em uniform finite approximate closure} of the SILL dictionary, which was inspired by some of the dictionary functions learned by deepDMD, the deep-learning based Koopman model in \cite{yeung2019learning}. To an extent, this analysis explains the success of deepDMD in learning.  But, deepDMD learns more than this one type of function (see Fig. 3 of \cite{yeung2019learning}).  Indeed, it can learn dictionary functions from multiple function classes simultaneously, and we have observed it to do so (see Fig. \ref{fig:DDMD_0bsTogg}).

Inspired by these observations, we contribute the first analysis of a heterogeneous, or mixed, basis dictionary function for Koopman learning. While mixing elements from different functional basis does increase the representational efficiency of the dictionary, the subspace invariance, or closure, properties of such a dictionary are not straightforward to understand.  

Introducing heterogeneous basis leads to added complexity of the resulting dictionaries. This complexity is transferred to the representation of the vector field over which we take Lie-derivatives of each dictionary function (see the proof of Lemma 1 in Section \ref{sec:AppendixProofs} of the appendix). We evaluate the vector field of the new class of Koopman models by computing the Lie-derivatives of their heterogeneous mixture of dictionary functions. We then work from these Lie-derivatives to show that these Koopman models are subspace invariant. 

We also follow the pattern we see from deep learning by mixing classes of dictionary functions to build low-dimensional models. The mixed function bases we study give further insight into the strategy adopted by deepDMD to solve the data-driven Koopman learning problem. Combining our new function dictionaries with an effective learning algorithm yields a training loss comparable to deepDMD, but with an order of magnitude reduction in model complexity.

\section{The Koopman Generator Learning Problem}\label{sec:KLP}

The learning problem that we address is explained in full detail in Section 2 of \cite{johnson2022SILL}. In this problem, $r$ $m-$dimensional measurements, $y$, of an $n-$dimensional state, $x$, evolving in time according to an analytic dynamic system, $\dot x=f(x)$, are used to solve the following optimization problem,

\begin{equation}\label{eq:objective}
\min_{K, \psi} \sum_{i=1}^{r} \left\Vert  \frac{d\psi(y(x_i))}{dt}  - K \psi(y(x_i)) \right\Vert,
\end{equation}

where $K\in\R^{N\times N}$ and $\psi$ is from the set of all differentiable functions from $\R^m$ to $\R^N$, where $N$, the dimension of our approximate Koopman operator, is a natural number greater than $m$, the dimension of our measurements.

In EDMD the dictionary functions $\psi(y)$ are predefined, drawn from a  {\it homogeneous} class of nonlinear functions, and assumed to be known. Under these assumptions, Eq. (\ref{eq:objective}) is a convex optimization problem with a closed-form solution.

By contrast, in deepDMD, the dictionary functions and $K$ are learned simultaneously during iterative training.  This is, of course, a nonlinear, non-convex optimization problem, for which we employ variants of the stochastic gradient descent algorithm from Tensorflow or Pytorch, such as adaptive gradient descent (AdaGrad, \cite{duchi2011adaptive}) or adaptive momentum (ADAM, \cite{kingma2014adam}).   These numerical approaches provide no guarantee that the learned set of dictionary functions $\psi_1(y), \psi_2(y), ...,\psi_{N}(y)$ are homogeneous, or drawn from the same function class. The outcomes of deepDMD often produce heterogeneous dictionaries $\psi(y)$.

%With the system representation in Eq. (\ref{eq:koop_approx_behavior}) we can analyze Eq. (\ref{eq:nonlinear_system}) using linear system techniques.  Even if it is only possible to identify an approximate solution to Eq. (\ref{eq:objective}), for all $y \in \mathcal{R}$ \correction{The $\mathcal{R}$ is not defined. Is it supposed to be  $\mathbb{R}$. Also, isn't $y \in \mathbb{R}^m$ space? I think it is inconsistent notation. }; spectral analysis \correction{(you did not give any information about the spectral analysis in your mathematical preliminaries. If it is part of your discussion, I think you should talk about it.)} of the system can provide insight into the stability of the underlying nonlinear system within the region $\mathcal{R}$ of the phase space.    \shara{The last statement suggests that your optimzition problem is leading to the approximation which is not the case. You are formulating an approximate problem to start with. Also, I don't think you should state the optimization problem as the equation (like approximate eq 8).}

\begin{figure*}[ht]
    \centering
    \includegraphics[width=\linewidth]{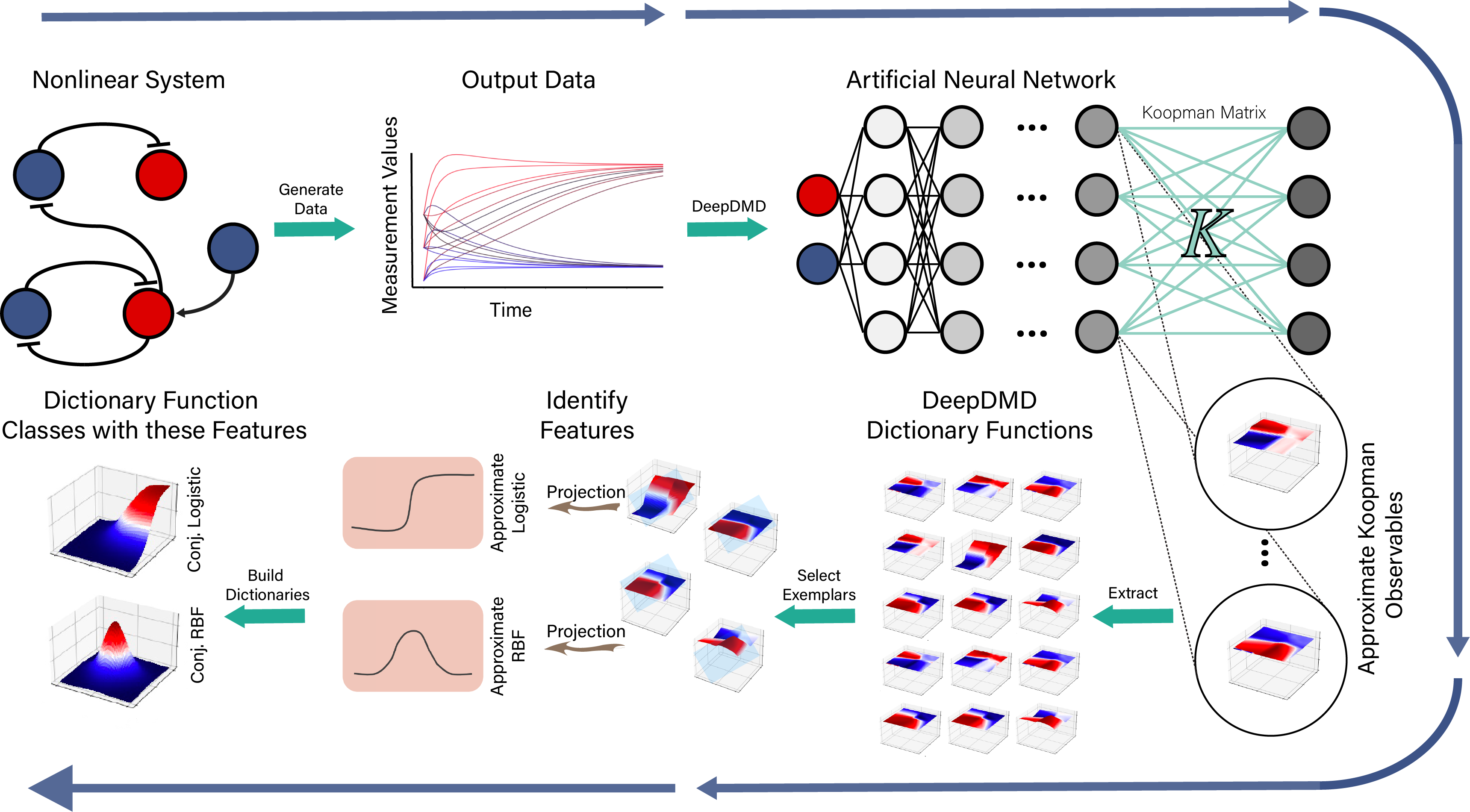}
    \caption{Process for the discovery of a novel mixed function dictionary with approximate subspace invariance. First, the deepDMD algorithm takes data from a nonlinear system to build approximate Koopman observables. Then, projections of these observables are analyzed for these functional properties. Finally, high-dimensional functions, whose projections satisfy these properties, are developed and their closure properties are verified.}
    \label{fig:DDMD_0bsTogg}
\end{figure*}

\begin{remark}
In this paper, we use the definitions of {\em subspace invariance}, {\em finite closure} and {\em uniform finite approximate closure} in Section 2 of \cite{johnson2022SILL}.
\end{remark}

\begin{remark}
We also follow their notation conventions explained in Section 3 of \cite{johnson2022SILL}. These are restated in Section \ref{sec:AppendixNotation} of the Appendix. 
\end{remark}

\section{AugSILL: A Heterogeneous Dictionary Model of DeepDMD's Learned Dictionary}\label{sec:augSILL}
%Most of the dictionary functions we see produced by DeepDMD could be represented coarsely as a weighted sum of conjunctive logistic functions.
%\shara{ \textbf{DANGER! }This is a strong claim that is not backed by simulations nor papers here. }
%However, mixing bases with distinct nonlinear profiles better matches the dictionaries learned by DeepDMD as we observe DeepDMD learn classes of dictionary functions that appear to \textbf{have distinct nonlinear properties. } \correction{There needs to be some sort of evidence to state this I believe. }

%\shara{I also do not get the context of the word \textbf{conjunctive} till here. It is extensively used. It is jargon heavy but not defined well.}

In this section, we define a heterogeneous dictionary for Koopman learning. This dictionary is, like the SILL dictionary in \cite{johnson2022SILL}, inspired from the dictionary functions learned by deepDMD. The nonlinear SILL functions are of the form: 

\be\Lambda(y;\theta_k)\triangleq\Lambda(y;\mu_k, \alpha_k)\triangleq \prod_{i=1}^{m} \lambda(y_i;\theta_{ki})\ee

where,

\be\lambda(y_i;\theta_{ki}) \triangleq \lambda(y_i;\mu_{ki}, \alpha_{ki}) \triangleq \frac{1}{1+e^{-\alpha_{ki}(y_i-\mu_{ki})}},\ee

where $\mu_{ki}$ is a steepness parameter in the $i^{th}$ measurement dimension and $\alpha_{ki}$ is a center parameter in the $i^{th}$ measurement dimension.

In our experience, projections of deepDMD's dictionary functions tended to match the profiles of logistic functions, however there were some punctuated irregularities, divots and bumps not easy to represent as a logistic function, in these projections (See Fig. \ref{fig:DDMD_0bsTogg}).  We choose to model these irregularities with radial basis functions (RBFs).  We therefore augment our SILL dictionary with conjunctive multivariate RBFs (conjunctive RBFs).  We define an RBF with a steepness of $\alpha_{ki}$ and a center of $\mu_{ki}$ as follows: \be\rho(y_i; \theta_{ki}) \triangleq \rho(y_i; \mu_{ki}, \alpha_{ki}) \triangleq \frac{e^{-\alpha_{ki} (y_i-\mu_{ki})}}{(1+e^{-\alpha_{ki}(y_i-\mu_{ki})})^2}.\ee

This RBF takes on its global maximum value of $\frac{1}{4}$ when the measurement $y_i=\mu_{ki}$ the center parameter. It radially approaches zero at a rate determined by the value of the steepness parameter $\alpha_{ki}$.

Note the following relationship between our RBF and logistic functions. The RBF  $\rho(y_i;\theta_{ki}) = \lambda(y_i;\theta_{ki}) - \lambda(y_i;\theta_{ki})^2$. So, even in one dimension, exactly approximating an RBF would require an infinite linear combination of SILL functions (a piecewise linear spline on infinitesimally small intervals). Thus, conjunctive RBFs contribute  distinct nonlinear features that are outside the span of the SILL function space.  This mathematical observation is the rationale for referring to this mixture of dictionary functions as heterogeneous.

We define a conjunctive RBF to be $P:\R^m\rightarrow\R$ so that: \be P(y;\theta_k)\triangleq P(y;\mu_k, \alpha_k)\triangleq\prod_{i=1}^m \rho(y_i;\theta_{ki}).\ee  

Conjunctive RBFs map their center to a value of $4^{-m}$ and radially around that center drop off to zero. The steepness parameters determine how quickly the drop off to zero occurs along each coordinate axis.

When we set the vector $y$ to be constant in all but the $l^{th}$ dimension, $P(y;\mu_j,\alpha_j) = (\prod_{i\neq l}c_i)\rho(y_l;\mu_{jl}, \alpha_{jl})$. This is a constant times the RBF  in the $l^{th}$ dimension. When we project dictionary functions learned by deepDMD to a single dimension, we observe that many contours approximate logistic functions, others approximate RBFs and scaled sums of RBFs and logistic functions. Fig. 3 of \cite{yeung2019learning} demonstrates some of these projected functions. %By adding conjunctive RBFs to our dictionary of functions we build a model that fits the surfaces learned by Deep DMD using fewer dimensions.

We propose the \textit{augSILL} (augmented SILL) dictionary as the stacked vector of a mixture of dictionary functions of $y$ \be \psi(y) \triangleq \begin{bmatrix}
1&y^T&\bar \Lambda^T&\bar P^T
\end{bmatrix}^T, \ee where the vector \be\bar\Lambda \triangleq [\Lambda(y;\theta_1), ..., \Lambda(y;\theta_{N_L})]^T\ee  contains all conjunctive logistic functions, and \be\bar P \triangleq [P(y;\theta_{N_L+1}), ..., P(y;\theta_{N_L+N_R})]^T\ee contains all conjunctive radial basis functions. In this article, the augSILL dictionary includes $N_L$ conjunctive logistic functions and $N_R$ conjunctive RBFs. The augSILL dictionary is of dimension $N=1+m+N_L+N_R$. To our knowledge this is the first time that anyone has analyzed the behavior of mixed dictionary functions for the numerical approximation of a Koopman operator or Koopman generator.

Can a mixture of function basis form a coherent basis that preserves subspace invariance, or at least uniform finite approximate closure? Certainly the more varied basis facilitates approximating the vector field, $F(y)$, due to the increased degrees of freedom in the dictionary. But the subspace invariance properties are just as important to Koopman models as approximating the vector field (see Example 1 of \cite{johnson2022SILL}). Using a mixed basis dictionary complicates the computation of the Lie derivatives of dictionary elements. This is because the representation of the vector field itself (a component in Lie derivative computation) has more degrees of freedom.

\begin{figure*}[ht]
\centering
\includegraphics[width=450pt]{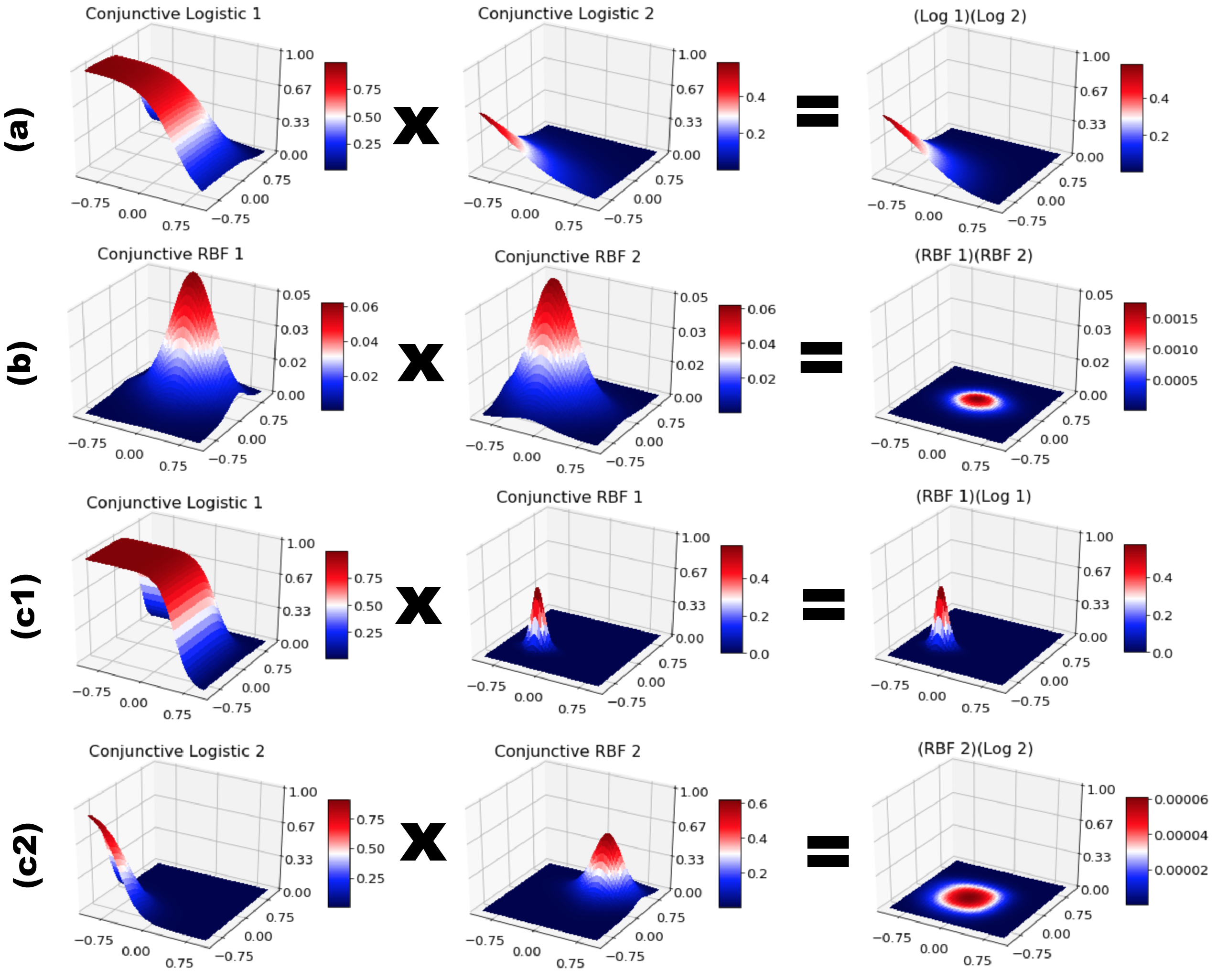}
\caption{ Visual representation of Theorem 1 of \cite{johnson2022SILL}, and Theorems \ref{thm:augSILLconv1}, \ref{thm:augSILLconv2} and \ref{thm:augSILLconv4}.
\textbf{(a)} Theorem 1 of \cite{johnson2022SILL}: The product of two conjunctive logistic functions approximates the conjunctive logistic function whose centers are greater in each dimension.
\textbf{(b)} Theorem \ref{thm:augSILLconv4}: The product of two conjunctive RBFs is nearly zero unless the norm of the difference between their centers is very small.
\textbf{(c1)} Theorem \ref{thm:augSILLconv1}: The product of a conjunctive logistic function and a conjunctive RBF approximates the RBF if at least one dimension of the center in the logistic function is greater than its partner center in the RBF.  \textbf{(c2)} Theorem \ref{thm:augSILLconv2}: When this is not the case the product is nearly zero.
}
\label{fig:rbftsig}
\end{figure*}

\section{Uniform Finite Approximate Closure of the AugSILL Dictionary}\label{sec:AugSILLclosure}

In this section we analyze the subspace invariance properties of a mixed dictionary, the augSILL dictionary. To do so, we  characterize the error term in Eq. (18) in \cite{johnson2022SILL}. In this section we show that the augSILL dictionary satisfies uniform finite approximate closure.

%We construct a Koopman model for a system whose function space is spanned by a set of augSILL functions and demonstrate the model’s accuracy to show uniform finite approximate closure.  We do this process in two commutative steps using distinct arguments. See the right side of Fig. \ref{fig:paper_summary}.

In  Sections \ref{sec:convThms1} and \ref{sec:convThms3} we establish theorems analogous to Theorem 1 in \cite{johnson2022SILL} and its corollaries for the augSILL basis.  In Section \ref{sec:augSILL_error} we demonstrate probabilistic results that combine with the corollaries in Section \ref{sec:convThms3} to uniformly bind average error of Eq. (18) in \cite{johnson2022SILL} with a constant, $B$, that can be arbitrarily small.  This approach applies to the SILL and augSILL dictionaries to show uniform finite approximate closure. % The augSILL basis error is smaller and decreases even faster with the number of measurements than the SILL basis error.

\subsection{Convergence in Steepness of Bilinear AugSILL Terms to the Span of the AugSILL Dictionary}\label{sec:convThms1}
When we construct a mixed dictionary and compute its Lie derivatives we find mixed bilinear terms involving both types of nonlinear dictionary functions. On the path to demonstrating uniform finite approximate closure, we need to approximate these bilinear terms with a single element of our dictionary.

In Section 5 of \cite{johnson2022SILL} we demonstrated that the product of two conjunctive logistic functions can be well approximated by a single conjunctive logistic function. Here we approximate the product of a conjunctive logistic and RBF. To make this approximation, we first define a decision function, $H$, as follows:

\begin{equation}
H(y;\theta_l, \theta_k) \triangleq \begin{cases}
P(y;\theta_k)\mbox{ if }  \mu_{li} < \mu_{ki}, \mbox{ for } i=1,2,...,m \\
0 \mbox{ otherwise,}
\end{cases}
\end{equation} where $l\in\{1,2,...,N_L\}$ and $k\in\{N_L+1, N_L+2,...,N_L+N_R\}$.

The decision function, $H$, uses the relative centers of $\Lambda$ and $P$, $\mu_l$ and $\mu_k$, to choose to take on the value of the conjunctive RBF ($P$) or zero. Both functions are in the augSILL dictionary as the zero function is trivially available.

To approximate the product of a conjunctive logistic and RBF, (see Fig. \ref{fig:rbftsig}\textbf{c1-c2}) we intend to show that: \begin{equation}\label{eq:approx2}
    \Lambda(y;\theta_l)P(y;\theta_k) \approx H(y;\theta_l, \theta_k).
    \end{equation}
    
Eq. (\ref{eq:approx2}) is provably quite a poor approximation under the wrong conditions, for example, when elements of $\alpha_l$ are negative.  However, the following theorems will demonstrate, that when each element of $\alpha$ is positive, this is a reasonable approximation. This is argued by showing that when the elements of $\alpha$ approach infinity, the error of the approximation approaches zero exponentially for all but a set of points that lie on $m (N_L+N_R)$, $m-1$ dimensional hyperplanes in $\R^m$. These are the points where the measurements exactly match up with the function centers. As we noted in Part 1, this collection of hyperplanes has Lebesgue measure zero in $\R^m$.

In the case where the center of $\Lambda(y;\theta_l)$ is less than the center of $P(y;\theta_k)$ in all dimensions we have that:
\begin{equation}\label{eq:P_errorterm1}
\begin{aligned}
\varepsilon_{\Lambda_lP_k}(y)\triangleq\Lambda(y;\theta_l) & P(y;\theta_k) -  H(y;\theta_l,\theta_k) \\ 
%&= \Lambda(y;\theta_l)P(x;\theta_k) -  P(y;\theta_k) \\
&= (\Lambda(y;\theta_l) - 1)P(y;\theta_k). \\
%&= \left(\prod_{i=1}^n \frac{1}{ 1+e^{-\alpha_{li}(x_i-\mu_{li})}} - 1 \right)\prod_{i=1}^n \frac{e^{\alpha_{ki}(x_i-\mu_{ki})}}{(1+e^{\alpha_{ki}(x_i-\mu_{ki})})^2}
\end{aligned}
\end{equation}  

Theorem \ref{thm:augSILLconv1} demonstrates that Eq. (\ref{eq:P_errorterm1}) goes to zero exponentially in the limit of the steepness parameters, $\alpha$. This theorem's purpose is to argue that the decision function, $H$, approximates that product of a conjunctive logistic and RBF when the center parameters of the RBF are more positive than those of the logistic in each dimension. Since $H$, under these circumstances, decides to approximate this product with the conjunctive RBF to begin with, the approximation is already present in the augSILL dictionary. This fact will allow us to argue the subspace invariance  of this dictionary further down the line.

Given an augSILL dictionary, $\psi(y)$, we define a set of $m(N_L+N_R)$, $m-1$ dimensional hyperplanes in $R^m$ corresponding to the centers of each 
$M_{\bar\Lambda,\bar P} \triangleq \{ y:\mbox{there exists } i\in\{1,2,...,m\} \mbox{ and } j\in \{1,2,...,N_L+N_R\} \mbox{ so that } y_i=\mu_{ji} \}.$ This defines specific measurements where the approximations in Eq. (\ref{eq:approx2}) and (\ref{eq:approx3}) have maximal error.

\begin{theorem}\label{thm:augSILLconv1}
When the measurements do not exactly align with the centers of the dictionary functions in any dimension and the center of the conjunctive RBF is more positive than the center of the conjunctive logistic function in each dimension, then their product exponentially converges to the conjunctive RBF as their steepness parameters go to infinity. Specifically, if $y\not\in M_{\bar\Lambda,\bar P}$ and for all $i=1, 2, ..., m$, $\mu_{ki} \geq \mu_{li}$, then as $\alpha\rightarrow\infty$, $\Lambda(y;\theta_l) P(y;\theta_k) -  H(y;\theta_l,\theta_k)\rightarrow 0$ exponentially.
\end{theorem}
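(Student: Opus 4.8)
The plan is to work directly with the error expression already isolated in Eq.~(\ref{eq:P_errorterm1}). Under the hypothesis $\mu_{ki}\ge\mu_{li}$ for all $i$ we are in the branch where $H(y;\theta_l,\theta_k)=P(y;\theta_k)$, so the quantity to drive to zero is $\varepsilon_{\Lambda_l P_k}(y)=(\Lambda(y;\theta_l)-1)\,P(y;\theta_k)$. The key structural observation is that the entire error is dominated by the conjunctive RBF factor $P(y;\theta_k)$ alone: the logistic contribution enters only through the bounded prefactor $\Lambda(y;\theta_l)-1$, and $P(y;\theta_k)$ decays exponentially precisely because $y$ stays off the RBF's center hyperplanes. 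Accordingly I would split the argument into a crude uniform bound on the logistic factor and an exponential bound on the RBF factor, and then multiply.

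First, since each $\lambda(y_i;\theta_{li})\in(0,1)$, the product $\Lambda(y;\theta_l)\in(0,1)$, so $|\Lambda(y;\theta_l)-1|<1$ for every choice of steepness parameters; this handles the $\Lambda-1$ term with no case analysis on the signs of $y_i-\mu_{li}$. Second, I would establish the pointwise estimate $\rho(y_i;\theta_{ki})\le e^{-\alpha_{ki}|y_i-\mu_{ki}|}$ for each $i$ whenever $\alpha_{ki}>0$: setting $t=\alpha_{ki}(y_i-\mu_{ki})$, this is the elementary inequality $e^{-t}/(1+e^{-t})^2\le e^{-|t|}$, which for $t\ge0$ follows from $(1+e^{-t})^2\ge1$ and for $t<0$ from $e^{|t|}\le 1+e^{|t|}$. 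Taking the product over $i=1,\dots,m$ gives $0< P(y;\theta_k)\le \exp(-\sum_{i=1}^{m}\alpha_{ki}|y_i-\mu_{ki}|)$. Because $y\notin M_{\bar\Lambda,\bar P}$, every $|y_i-\mu_{ki}|$ is strictly positive, so the exponent is a strictly negative multiple of the steepness scale. Hence $|\varepsilon_{\Lambda_l P_k}(y)|\le |\Lambda(y;\theta_l)-1|\,P(y;\theta_k)<P(y;\theta_k)\le \exp(-\sum_{i=1}^{m}\alpha_{ki}|y_i-\mu_{ki}|)\to 0$ as $\alpha\to\infty$, and the bound is manifestly exponential in the steepness scale. (If $\mu_{ki}=\mu_{li}$ for some $i$, so that strictly $H=0$, the error is instead $\Lambda(y;\theta_l)P(y;\theta_k)\le P(y;\theta_k)$ and the same bound applies.)

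There is no deep obstacle here; the content is entirely in recognizing that the $\Lambda-1$ prefactor is harmless and that the RBF's off-center exponential decay does all the work. The only point needing a little care is making "$\alpha\to\infty$" precise --- for instance, taking all steepness parameters to share a common scale, or to be bounded below by a quantity tending to infinity --- so that "exponential convergence" refers to a single rate; once the uniform estimate $\exp(-\sum_i\alpha_{ki}|y_i-\mu_{ki}|)$ is in hand this is immediate, and this is also exactly the explicit, integrable-in-$y$ form that the probabilistic averaging over the measure-zero set $M_{\bar\Lambda,\bar P}$ in Section~\ref{sec:augSILL_error} will require. \QED
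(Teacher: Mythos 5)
Your proposal is correct and follows essentially the same route as the paper's proof: bound the prefactor $\Lambda(y;\theta_l)-1$ by $1$ in absolute value, then show that $P(y;\theta_k)$ alone decays exponentially whenever $y$ avoids the center hyperplanes. Your single estimate $\rho(y_i;\theta_{ki})\le e^{-\alpha_{ki}|y_i-\mu_{ki}|}$ merely packages the paper's two-case analysis ($y_i>\mu_{ki}$ versus $y_i<\mu_{ki}$) into one inequality with an explicit rate, and your parenthetical treatment of the boundary case $\mu_{ki}=\mu_{li}$ (where the decision function gives $H=0$ rather than $P$) quietly covers a case the paper's proof does not address.
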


\begin{proof}
As a preliminary, we note that $-1\leq (\Lambda(y;\theta_l) - 1)\leq 0$ for any $y\in\R^m$ and for $\alpha>0$.  

We first show that when $\mu_{ki} \geq \mu_{li}$ $P(y;\theta_k)\rightarrow 0$ as $\alpha\rightarrow \infty$. We do so by considering the two possible cases for any $i$ where $\mu_{ki} \geq \mu_{li}$. 

\textbf{Case 1}, $y_i>\mu_{ki}$.  In this case as $\alpha\rightarrow\infty$ we have that $e^{-\alpha_{ki}(y_i-\mu_{ki})}\rightarrow 0$ exponentially, and that $(1+e^{-\alpha_{ki}(y_i-\mu_{ki})})^2 \rightarrow 1^2=1$. Thus the $i^{th}$ term of $P(y;\theta_k)$ will go to  $0$ exponentially.

\textbf{Case 2}, $y_i<\mu_{ki}$.  In this case as $\alpha\rightarrow\infty$  we have that $\frac{e^{-\alpha_{ki}(y_i-\mu_{ki})}}{(1+e^{-\alpha_{ki}(y_i-\mu_{ki})})^2}\rightarrow 0$. Thus the $i^{th}$ term of $P(y;\theta_k)$ will go to $0$ exponentially.

Each of these cases implies that $P(y;\theta_k)$ and therefore Eq. (\ref{eq:P_errorterm1}) goes to $0$ exponentially.
$\blacksquare$\end{proof}

Theorem \ref{thm:augSILLconv1} implies that the bilinear augSILL term, $\Lambda P$, a product of two dictionary functions, is well approximated by $P$, a single dictionary function when $P$'s center parameters are more positive than $\Lambda$'s.  This fact will be key to showing that bilinear terms in the Lie derivatives of augSILL dictionary functions are approximated as linear terms in Section \ref{sec:convThms3}.

Theorem \ref{thm:augSILLconv2} implies that $\Lambda P$ is well approximated by $0$, when at least one of $P$'s center parameters are more negative than $\Lambda$'s. This, of course, is assuming that all the steepness parameters of these dictionary functions are positive, since the approximation is evaluated in the limit of high steepness.

When the center of $\Lambda(y;\theta_l)$ is greater than the center of $P(y;\theta_k)$ in one or more dimensions we have that:
\begin{equation}\label{eq:P_errorterm2}
\begin{aligned}
\varepsilon_{\Lambda_lP_k}(y)\triangleq\Lambda(y;\theta_l) & P(y;\theta_k) -  H(y;\theta_l, \theta_k) \\ 
%&= \Lambda(y;\theta_l)P(y;\theta_k) - 0\\
&= \Lambda(y;\theta_l)P(y;\theta_k) \\
%&= \prod_{i=1}^n \frac{e^{-\alpha_{ki}(y_i-\mu_{ki})}}{(1+e^{-\alpha_{li}(y_i-\mu_{li})})(1+e^{-\alpha_{ki}(y_i-\mu_{ki})})^2}.
\end{aligned}
\end{equation}

Theorem \ref{thm:augSILLconv2} demonstrates that the approximation in Eq. (\ref{eq:P_errorterm2}) is a good approximation in the limit of $\alpha$.  

\begin{theorem}\label{thm:augSILLconv2}
When the measurements do not exactly align with the centers of the dictionary functions in any dimension and the center of the conjunctive logistic function is more positive than the center of the RBF in all dimensions, then their product exponentially converges to zero as their steepness parameters go to infinity. Specifically, if $y\not\in M_{\bar\Lambda,\bar P}$ and $y_i\neq\mu_{li} $ for all $ i\in \{1, 2, ..., m\}$ so that $\mu_{ki} < \mu_{li}$, then as $\alpha\rightarrow\infty$, \[\Lambda(y;\theta_l) P(y;\theta_k) -  H(y;\theta_l, \theta_k) \rightarrow 0\] exponentially.
\end{theorem}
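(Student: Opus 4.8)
The plan is to mirror the structure of the proof of Theorem~\ref{thm:augSILLconv1}. Under the present hypotheses the decision function is in its ``otherwise'' branch, so $H(y;\theta_l,\theta_k)=0$ and, by Eq.~(\ref{eq:P_errorterm2}), it suffices to show that the raw product $\Lambda(y;\theta_l)P(y;\theta_k)$ decays exponentially as $\alpha\to\infty$. First I would record the uniform bounds valid for every $\alpha>0$: each scalar logistic factor satisfies $0\le\lambda(y_i;\theta_{li})\le 1$, and each scalar RBF factor satisfies $0\le\rho(y_i;\theta_{ki})\le \frac{1}{4}$ (indeed $\rho=\lambda-\lambda^2$ with $\lambda\in[0,1]$). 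Hence $\Lambda(y;\theta_l)\le 1$ and the product of any $m-1$ of the scalar RBF factors is at most $4^{-(m-1)}$.

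Next I would isolate a single coordinate in which the logistic center dominates. Since $H$ is in its ``otherwise'' case, it is not true that $\mu_{li}<\mu_{ki}$ for all $i$, so there is an index $j$ with $\mu_{lj}\ge\mu_{kj}$; and since $y\notin M_{\bar\Lambda,\bar P}$ we also have $y_j\neq\mu_{lj}$. Keeping the two factors indexed by $j$ and bounding all the others by the estimates above gives
\[
0\le \Lambda(y;\theta_l)P(y;\theta_k)\ \le\ 4^{-(m-1)}\,\lambda(y_j;\theta_{lj})\,\rho(y_j;\theta_{kj}),
\]
so the whole problem reduces to showing that the scalar product $\lambda(y_j;\theta_{lj})\,\rho(y_j;\theta_{kj})$ vanishes exponentially.

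Then I would split into the two cases that remain once $y_j=\mu_{lj}$ is excluded. If $y_j>\mu_{lj}\ge\mu_{kj}$, then $y_j-\mu_{kj}>0$ and, exactly as in Case~1 of Theorem~\ref{thm:augSILLconv1}, $\rho(y_j;\theta_{kj})\le e^{-\alpha_{kj}(y_j-\mu_{kj})}\to 0$ exponentially while the logistic factor remains in $[0,1]$. If instead $y_j<\mu_{lj}$, then $-\alpha_{lj}(y_j-\mu_{lj})=\alpha_{lj}(\mu_{lj}-y_j)\to+\infty$, so $\lambda(y_j;\theta_{lj})=\bigl(1+e^{\alpha_{lj}(\mu_{lj}-y_j)}\bigr)^{-1}\le e^{-\alpha_{lj}(\mu_{lj}-y_j)}\to 0$ exponentially while the RBF factor remains in $[0,\frac{1}{4}]$. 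In either case the displayed bound forces $\varepsilon_{\Lambda_lP_k}(y)=\Lambda(y;\theta_l)P(y;\theta_k)\to 0$ exponentially, which is the assertion of the theorem.

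I do not expect a genuine obstacle here: this is a short case analysis of the same flavor as the previous theorem. The one subtlety is that one cannot pick the coordinate $j$ freely --- in coordinates with $\mu_{ki}>\mu_{li}$ the product generically does not decay --- so the argument must commit to a coordinate singled out by the $H=0$ condition, and it uses $y\notin M_{\bar\Lambda,\bar P}$ only to rule out the single borderline value $y_j=\mu_{lj}$, at which neither scalar factor would vanish.
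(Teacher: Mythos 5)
Your proof is correct and rests on the same mechanism as the paper's: a coordinate-wise case analysis on the ordering of $y_j$, $\mu_{lj}$, $\mu_{kj}$ showing that one scalar factor of $\Lambda(y;\theta_l)P(y;\theta_k)$ decays like $e^{-c\alpha}$ while the rest stay bounded. The organizational difference is that the paper assumes $\mu_{ki}<\mu_{li}$ in every coordinate (matching the theorem's informal wording) and argues that every factor $\lambda(y_i;\theta_{li})\rho(y_i;\theta_{ki})$ tends to zero, whereas you commit to a single witness coordinate $j$ with $\mu_{lj}\ge\mu_{kj}$ supplied by the ``otherwise'' branch of $H$ and bound the remaining factors uniformly by $1$ and $\tfrac14$. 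Your version is slightly more robust: it covers the entire $H=0$ branch, including the mixed case where the centers are ordered differently in different coordinates, which the paper's proof as literally written does not address. One small quibble: your closing remark that the factor ``generically does not decay'' in coordinates with $\mu_{ki}>\mu_{li}$ is not accurate --- $\rho(y_i;\theta_{ki})\to 0$ exponentially for any $y_i\neq\mu_{ki}$ regardless of where $\mu_{li}$ sits (this is exactly what the proof of Theorem \ref{thm:augSILLconv1} establishes), so every factor off the center hyperplanes decays; this over-caution is harmless and does not affect the validity of your argument.
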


\begin{proof}
We assume that $\mu_{ki} < \mu_{li}$ and note that, in each case as $\alpha\rightarrow \infty$ the $i^{th}$ term in Eq. (\ref{eq:P_errorterm2}) goes to zero as shown in Case 2 of the proof of Theorem \ref{thm:augSILLconv1}.  Thus the product of these terms will go to zero as $\alpha\rightarrow\infty$.  For the sake of brevity we forgo the explicit computation of the limits which follow and speak in more general terms of rate of growth. We note that when we use ``$\infty_c$'' we mean to say that the term grows to infinity with $\alpha$ with a rate of $e^{c\alpha}$, for some positive constant $c$.  Furthermore, we use ``$0_c$'' to mean that the term goes to zero with a rate of $e^{-c\alpha}$, for some positive constant $c$.

\textbf{Case 1}, $y_i > \mu_{ki} $ \textit{Sub-Case 1.1}, $y_i > \mu_{li}$ so as $\alpha\rightarrow\infty$ our $i^{th}$ term goes to $\frac{0_{c_k}}{1}=0$. 
\textit{Sub-Case 1.2}, $y_i = \mu_{li}$ so as $\alpha\rightarrow\infty$ our $i^{th}$ term goes to $\frac{0_{c_k}}{2}=0$. 
\textit{Sub-Case 1.3},  $y_i < \mu_{li}$ so as $\alpha\rightarrow\infty$ our $i^{th}$ term goes to $\frac{0_{c_k}}{\infty_{c_l}}=0$.

\textbf{Case 2}, $y_i < \mu_{ki}$, thus $x_i<\mu_{li}$ and so as $\alpha\rightarrow\infty$ our $i^{th}$ term goes to  $\frac{\infty_{c_k}}{\infty_{c_k}^2\infty_{c_l}}=0$. 
\textit{Sub-Case 2.1}, $y_j > \mu_{li}$ so as $\alpha\rightarrow\infty$ our $i^{th}$ term goes to $\frac{\infty_{c_k}}{\infty_{c_k}^2}=0$. 
\textit{Sub-Case 2.2}, $y_i = \mu_{li}$ so as $\alpha\rightarrow\infty$ our $i^{th}$ term goes to $\frac{\infty_{c_k}}{2\infty_{c_k}^2}=0$. 
\textit{Sub-Case 2.3},  $y_i < \mu_{li}$ so as $\alpha\rightarrow\infty$ our $i^{th}$ term goes to $\frac{\infty_{c_k}}{\infty_{c_k}^2\infty_{c_l}}=0$.

Now we note that, in each case as $\alpha\rightarrow \infty$ the $i^{th}$ term in Eq. (\ref{eq:P_errorterm2}) goes to zero exponentially.  Thus the product of these terms will go to zero exponentially as $\alpha\rightarrow\infty$.
$\blacksquare$\end{proof}

Between Theorems \ref{thm:augSILLconv1} and \ref{thm:augSILLconv2} we have that $\Lambda P \approx  H$. Specifically, $\Lambda(y;\theta_l) P(y;\theta_k) \approx  P(y;\theta_k)$ when $\theta_l\lesssim\theta_k$ and $0$ otherwise. The only pathological case excluded from these theorems is when the conjunctive logistic and conjunctive RBF centers exactly match in some dimension. Distance from the pathology becomes relevant when steepness parameters, $\alpha$, are too small. %When $\alpha$ is large the centers can be close to overlapping with little error.

Now, we approximate the product of two conjunctive RBFs, completing all the possible combinations of pairwise products between elements of our mixed basis. The product of two conjunctive logistic functions, $\Lambda_l \Lambda_k$, is covered in Section 5 of \cite{johnson2022SILL}. To approximate $P_l P_k$ (see Fig. \ref{fig:rbftsig}\textbf{b}), Theorem \ref{thm:augSILLconv4} describes conditions under which 
\begin{equation}\label{eq:approx3}
    \varepsilon_{P_lP_k}(y)\triangleq P(y;\theta_l)P(y;\theta_k) \approx 0.
\end{equation}

Because Theorem \ref{thm:augSILLconv4} shows that the approximation in Eq. (\ref{eq:approx3}) is a good approximation in the limit of $\alpha$. It takes the bilinear term, $P_l P_k$, and approximates it with the zero function which is trivially in all augSILL dictionaries.

%\begin{theorem}\label{thm:augSILLconv3}
%When the centers of two conjunctive RBFs are not equal in any dimension, then their product converges exponentially to zero as their steepness increases.
%Specifically, if $||\mu_l-\mu_k||_\infty\neq 0$ or $\mu_{li}\neq \mu_{ki}$ for all $i\in\{1,2,...,m\}$ and all $k\in\{N_L+1,N_L+2,...,N\}$, then as $\alpha\rightarrow\infty$, $P(y;\theta_l)P(y;\theta_k)\rightarrow 0$ exponentially.
%\end{theorem}

%The following theorem shows that, even if the centers are aligned, in the limit of a steep conjunctive RBF, their product will be zero for all but a single value of $y$. This reinforces the concept of Theorem \ref{thm:augSILLconv3}, that the approximation in Eq. (\ref{eq:approx3}) is a good one.

\begin{theorem}\label{thm:augSILLconv4} 
When the measurements do not exactly align with the centers of the dictionary functions in any dimension, then the product of two conjunctive RBFs converges exponentially to zero as their steepness increases. 
Specifically, if $y\not\in M_{\bar\Lambda,\bar P}$, then as $\alpha\rightarrow\infty$, $P(y;\theta_l)P(y;\theta_k)\rightarrow 0$ exponentially.
\end{theorem}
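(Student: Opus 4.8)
The plan is to reduce the statement to the decay of a single scalar RBF factor, just as in the proof of Theorem~\ref{thm:augSILLconv1}. First I would record the elementary two-sided bound $0 < \rho(y_i;\theta_{ji}) \le \tfrac14$ that holds for every argument and every sign of the steepness parameter: the numerator and denominator of $\rho$ are respectively positive and a positive square, and $t/(1+t)^2 \le \tfrac14$ for $t>0$, with equality at $t=1$ (i.e. $y_i=\mu_{ji}$), which is the global maximum noted above. Taking the product over the $m$ coordinates gives $0 < P(y;\theta_l)\le 4^{-m}$ and $0 < P(y;\theta_k)\le 4^{-m}$, so each of the $2m$ scalar factors appearing in the product $P(y;\theta_l)P(y;\theta_k)$ lies in $(0,\tfrac14]$.

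Next I would show that a single scalar factor decays exponentially in its steepness whenever the corresponding measurement misses its center. Fix $i$ and suppose $y_i \ne \mu_{li}$. If $y_i > \mu_{li}$, then $e^{-\alpha_{li}(y_i-\mu_{li})}\to 0$ at rate $e^{-c_i\alpha_{li}}$ with $c_i=y_i-\mu_{li}>0$, while the denominator $(1+e^{-\alpha_{li}(y_i-\mu_{li})})^2\to 1$; hence $\rho(y_i;\theta_{li})\to 0$ at that rate. This is exactly Case~1 of the proof of Theorem~\ref{thm:augSILLconv1}. If $y_i < \mu_{li}$, write $c_i=\mu_{li}-y_i>0$ and $t=e^{\alpha_{li}c_i}$; then $\rho(y_i;\theta_{li}) = t/(1+t)^2 \le t/t^2 = e^{-\alpha_{li}c_i}$, using $(1+t)^2 \ge t^2$ for $t>0$. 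This is Case~2 of that proof. In both cases $\rho(y_i;\theta_{li}) \le e^{-c_i\alpha_{li}}$ for some $c_i = |y_i-\mu_{li}|>0$.

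Finally I would assemble the bound. Since $y\notin M_{\bar\Lambda,\bar P}$, we have $y_i\ne\mu_{li}$ for every $i$ (indeed $y_i\ne\mu_{ji}$ for all $i$ and all relevant $j$), so fix any coordinate, say $i=1$, and bound the remaining $2m-1$ scalar factors of $P(y;\theta_l)P(y;\theta_k)$ by $\tfrac14$ each to obtain
\[
0 < P(y;\theta_l)P(y;\theta_k) \;\le\; 4^{-(2m-1)}\,\rho(y_1;\theta_{l1}) \;\le\; 4^{-(2m-1)}\,e^{-c_1\alpha_{l1}},
\]
which tends to $0$ exponentially as $\alpha\to\infty$, proving the theorem. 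I do not expect any real obstacle here: unlike Theorems~\ref{thm:augSILLconv1} and~\ref{thm:augSILLconv2}, the decision function $H$ plays no role (the approximant is simply $0$), and every factor of the product already vanishes in the limit, so this is a direct product-of-bounded-terms estimate. The only point needing a little care is the case $y_i<\mu_{li}$, where one must check that the quadratic denominator dominates the numerator; the crude inequality $(1+t)^2\ge t^2$ settles it. If a uniform rate is wanted, one notes that $c_1$ (or the smallest center gap $\min_{i}|y_i-\mu_{li}|$) is bounded away from $0$ on any compact set disjoint from $M_{\bar\Lambda,\bar P}$.
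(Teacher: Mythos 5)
Your proof is correct, and it takes a genuinely different and more elementary route than the paper's. The paper proves Theorem~\ref{thm:augSILLconv4} by reference to its Lemma~1, which groups the product per coordinate into the single fraction $\frac{e^{-\alpha_{li}(y_i-\mu_{li})}e^{-\alpha_{ki}(y_i-\mu_{ki})}}{(1+e^{-\alpha_{li}(y_i-\mu_{li})})^2(1+e^{-\alpha_{ki}(y_i-\mu_{ki})})^2}$ and then works through a $3\times 3$ table of cases; the off-diagonal ``corner'' entries $\beta_1,\beta_2$ (where one exponential in the numerator blows up while the other vanishes) require a further three-way sub-case analysis on the sign of $-\alpha_{li}(y_i-\mu_{li})-\alpha_{ki}(y_i-\mu_{ki})$. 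You sidestep all of that by factoring $P(y;\theta_l)P(y;\theta_k)$ into $2m$ scalar RBFs, each bounded by $\tfrac14$ and each individually satisfying $\rho(y_i;\theta_{ji})\le e^{-|y_i-\mu_{ji}|\,\alpha_{ji}}$ once $y_i\ne\mu_{ji}$ (the two one-sided estimates $\rho\le t$ and $\rho\le 1/t$ with $t=e^{-\alpha_{ji}(y_i-\mu_{ji})}$). This buys a shorter argument with an explicit, non-asymptotic bound $4^{-(2m-1)}e^{-c_1\alpha_{l1}}$ and no case table. What the paper's heavier machinery buys is Lemma~1 itself, which is strictly stronger: it tolerates coordinates where $y_i$ coincides with one or even both centers (yielding the non-decaying factor $\tfrac1{16}$) so long as $\|\mu_l-\mu_k\|_\infty\ne 0$, a situation your per-factor decay argument cannot handle. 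Since Theorem~\ref{thm:augSILLconv4} assumes $y\not\in M_{\bar\Lambda,\bar P}$, none of that generality is needed, and your proof stands on its own.
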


The proof of Theorem \ref{thm:augSILLconv4}, as well as a supporting lemma is presented in Section \ref{sec:AppendixProofs} of the appendix. 

We now have four approximation theorems for the bilinear mathematical terms which arise when computing the Lie derivatives of augSILL basis functions. We now apply them to these Lie derivatives with a set of approximation corollaries. These corollaries, together with the argument in Section \ref{sec:augSILL_error}, will demonstrate the uniform finite approximate closure of the augSILL dictionary.

\subsection{Showing bilinear Lie derivatives can be approximated linearly to satisfy the Koopman generator equation}\label{sec:convThms3}

Corollary \ref{cor:logApprox} approximates the Lie derivative of a conjunctive logistic function in the context of our mixed basis. The approximation that this corollary suggests is not a Koopman model itself (see Fig. 1 in \cite{johnson2022SILL}). In Section \ref{sec:augSILL_error} we demonstrate uniform finite approximate closure by approximating this intermediate approximation with a full Koopman model.

\begin{corollary}\label{cor:logApprox}
Under the assumptions of Theorem 1 in \cite{johnson2022SILL} and the assumption that the augSILL observables span
the vector field we seek to model, the Lie derivative of a conjunctive logistic function exponentially approaches a nonlinear combination of augSILL functions, specifically,
\be\label{eq:augSILL_limApproxLog}
\dot{\Lambda}(y&;\theta_l) \rightarrow \sum_{i=1}^{m}\sum_{j=1}^{N_L} \alpha_{li}w_{ij}(1 - \lambda(y_i;\theta_{li})) \Lambda(y;\theta^*)  \\& +\sum_{i=1}^{m}\sum_{k=N_L+1}^{N_L+N_R} \alpha_{li}w_{ik}(1 - \lambda(y_i;\theta_{li})) H(y;\theta_{l},\theta_{k})
\ee exponentially as $\alpha\rightarrow\infty$.
\end{corollary}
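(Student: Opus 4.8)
\emph{Proof proposal.} The plan is to evaluate the Lie derivative of $\Lambda(y;\theta_l)$ in closed form, substitute the augSILL expansion of the vector field supplied by the spanning assumption, and then replace each of the resulting bilinear products of dictionary functions by the single dictionary function that approximates it. The replacements are exactly the content of Theorem~1 of \cite{johnson2022SILL} (for products of two conjunctive logistic functions) and of Theorems~\ref{thm:augSILLconv1} and \ref{thm:augSILLconv2} (for a product of a conjunctive logistic function and a conjunctive RBF). Throughout I use the inherited hypothesis that all steepness parameters are positive.

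First I would apply the chain rule along trajectories, $\dot{\Lambda}(y;\theta_l) = \sum_{i=1}^{m} \frac{\partial \Lambda(y;\theta_l)}{\partial y_i} F_i(y)$, where $F$ is the vector field we seek to model. The elementary logistic identity together with the product structure of $\Lambda$ gives
\begin{equation*}
\begin{aligned}
\frac{d}{dy_i}\lambda(y_i;\theta_{li}) &= \alpha_{li}\lambda(y_i;\theta_{li})\bigl(1-\lambda(y_i;\theta_{li})\bigr) = \alpha_{li}\rho(y_i;\theta_{li}), \\
\frac{\partial \Lambda(y;\theta_l)}{\partial y_i} &= \alpha_{li}\bigl(1-\lambda(y_i;\theta_{li})\bigr)\Lambda(y;\theta_l).
\end{aligned}
\end{equation*}
By the assumption that the augSILL observables span $F$, we may write $F_i(y) = \sum_{j=1}^{N_L} w_{ij}\Lambda(y;\theta_j) + \sum_{k=N_L+1}^{N_L+N_R} w_{ik}P(y;\theta_k)$. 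Substituting both expressions, $\dot{\Lambda}(y;\theta_l)$ becomes a finite sum of bilinear terms: each term $\alpha_{li}w_{ij}\bigl(1-\lambda(y_i;\theta_{li})\bigr)\Lambda(y;\theta_l)\Lambda(y;\theta_j)$ contains a logistic--logistic product, and each term $\alpha_{li}w_{ik}\bigl(1-\lambda(y_i;\theta_{li})\bigr)\Lambda(y;\theta_l)P(y;\theta_k)$ contains a logistic--RBF product. Comparing this with the right-hand side of \eqref{eq:augSILL_limApproxLog}, the only difference is that there $\Lambda(y;\theta_l)\Lambda(y;\theta_j)$ has been replaced by $\Lambda(y;\theta^*)$ and $\Lambda(y;\theta_l)P(y;\theta_k)$ by $H(y;\theta_l,\theta_k)$.

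Next I would invoke the product-approximation results termwise. For each logistic--logistic product, Theorem~1 of \cite{johnson2022SILL} gives $\Lambda(y;\theta_l)\Lambda(y;\theta_j) \to \Lambda(y;\theta^*)$ exponentially as $\alpha\to\infty$ for every $y\notin M_{\bar\Lambda,\bar P}$, where $\theta^*$ carries the coordinatewise-larger centers. For each logistic--RBF product, Theorems~\ref{thm:augSILLconv1} and \ref{thm:augSILLconv2} together give $\Lambda(y;\theta_l)P(y;\theta_k) \to H(y;\theta_l,\theta_k)$ exponentially as $\alpha\to\infty$ for every $y\notin M_{\bar\Lambda,\bar P}$ --- Theorem~\ref{thm:augSILLconv1} for the case $\mu_{ki}\ge\mu_{li}$ for all $i$, Theorem~\ref{thm:augSILLconv2} for the complementary case. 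Because $\theta^*$ and the argument of $H$ involve only centers already listed in $M_{\bar\Lambda,\bar P}$, no new exceptional hyperplanes appear, so the statement holds on the same measure-zero complement as the product theorems.

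Finally I would estimate the error. The difference between $\dot{\Lambda}(y;\theta_l)$ and the right-hand side of \eqref{eq:augSILL_limApproxLog} is the finite sum, over the $m(N_L+N_R)$ index pairs, of the terms $\alpha_{li}w_{ij}\bigl(1-\lambda(y_i;\theta_{li})\bigr)\bigl(\Lambda(y;\theta_l)\Lambda(y;\theta_j)-\Lambda(y;\theta^*)\bigr)$ and $\alpha_{li}w_{ik}\bigl(1-\lambda(y_i;\theta_{li})\bigr)\varepsilon_{\Lambda_lP_k}(y)$, each of whose product-approximation errors decays exponentially in $\alpha$ by the results just cited. \textbf{The main obstacle} is that the coefficient of each error factor itself carries a factor $\alpha_{li}$ and therefore grows in precisely the limit $\alpha\to\infty$ in which convergence is claimed; one must check the exponential decay beats this growth. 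It does: $0\le 1-\lambda(y_i;\theta_{li})\le 1$ bounds the logistic factor, so each coefficient grows at most linearly in $\alpha$; a linearly growing factor times an exponentially decaying factor is still exponentially decaying ($\alpha e^{-c\alpha}\to 0$); and a finite sum of exponentially decaying terms decays exponentially at the slowest of the individual rates. Hence, for every $y\notin M_{\bar\Lambda,\bar P}$, $\dot{\Lambda}(y;\theta_l)$ converges exponentially as $\alpha\to\infty$ to the right-hand side of \eqref{eq:augSILL_limApproxLog}, which is the corollary. A secondary point worth verifying is that the expansion coefficients $w_{ij},w_{ik}$ remain bounded as the dictionary steepness grows, so that they do not spoil the bound; this is part of the setting inherited from Theorem~1 of \cite{johnson2022SILL}.
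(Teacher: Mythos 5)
Your proposal follows essentially the same route as the paper's proof: compute $\nabla_y\Lambda(y;\theta_l)$ to get the prefactor $\alpha_{li}(1-\lambda(y_i;\theta_{li}))\Lambda(y;\theta_l)$, substitute the augSILL expansion of $F_i$, and then apply Theorem~1 of \cite{johnson2022SILL} together with Theorems~\ref{thm:augSILLconv1} and \ref{thm:augSILLconv2} termwise to the resulting bilinear products. Your closing check that the explicit $\alpha_{li}$ growth in the coefficients is dominated by the exponential decay of each product-approximation error is a point the paper's proof passes over silently, and is a worthwhile addition.
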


Now that we have an intermediate approximation for the Lie derivative of a conjunctive logistic function we need a similar result for conjunctive RBFs. Once we have these results the analysis in Section \ref{sec:augSILL_error} can show a Koopman approximation of the derivatives of these two augSILL functions. This final approximation will be a linear combination of augSILL dictionary functions.

Corollary \ref{cor:rbfApprox} approximates the Lie derivative of a conjunctive RBF in the context of the augSILL basis. %As the augSILL basis is mixed, Corollary \ref{cor:logApprox} and Corollary \ref{cor:rbfApprox} go hand-in-hand to demonstrate closure.

 \begin{corollary}\label{cor:rbfApprox}
 Under the assumptions of Theorem 1 in \cite{johnson2022SILL} and the assumption that the augSILL observables span
the vector field we seek to model, the Lie derivative of a conjunctive RBF exponentially approaches a nonlinear combination of conjunctive RBFs, specifically,
 \be \label{eq:augSILL_limApproxRbf}
\dot{P}(y;\theta_l)& \rightarrow  \sum_{i=1}^{m}\sum_{j=1}^{N_L} \alpha_{li}w_{ij}(1 - 2\lambda(y_i;\theta_{li})) H(y;\theta_{j},\theta_{l})
\ee exponentially as $\alpha\rightarrow \infty$.
 \end{corollary}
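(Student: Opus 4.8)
The plan is to mirror the argument for Corollary \ref{cor:logApprox}, starting this time from a conjunctive RBF. First I would write the Lie derivative along $\dot y = F(y)$ coordinatewise as $\dot P(y;\theta_l) = \sum_{i=1}^m \frac{\partial P(y;\theta_l)}{\partial y_i}\,F_i(y)$ and compute the partial derivatives in closed form. Differentiating the single-variable factor and using the identity $\rho(y_i;\theta_{li}) = \lambda(y_i;\theta_{li}) - \lambda(y_i;\theta_{li})^2$ together with $\frac{\partial}{\partial y_i}\lambda(y_i;\theta_{li}) = \alpha_{li}\rho(y_i;\theta_{li})$ gives $\frac{\partial}{\partial y_i}\rho(y_i;\theta_{li}) = \alpha_{li}(1-2\lambda(y_i;\theta_{li}))\rho(y_i;\theta_{li})$; applying the product rule to $P=\prod_i\rho$ then yields the compact expression $\frac{\partial P(y;\theta_l)}{\partial y_i} = \alpha_{li}(1-2\lambda(y_i;\theta_{li}))\,P(y;\theta_l)$. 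This is the RBF counterpart of the factor $\alpha_{li}(1-\lambda(y_i;\theta_{li}))\Lambda(y;\theta_l)$ used in Corollary \ref{cor:logApprox}, and it is where the $(1-2\lambda(y_i;\theta_{li}))$ in Eq. (\ref{eq:augSILL_limApproxRbf}) comes from.

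Next I would invoke the spanning hypothesis to expand each vector-field component in the augSILL basis, $F_i(y) = \sum_{j=1}^{N_L} w_{ij}\Lambda(y;\theta_j) + \sum_{k=N_L+1}^{N_L+N_R} w_{ik}P(y;\theta_k)$, with the affine part of $\psi$ handled as in Part 1. Substituting this and the partial-derivative formula into the Lie derivative produces a finite double sum of exactly two kinds of bilinear terms, each weighted by $\alpha_{li}w_{i\bullet}(1-2\lambda(y_i;\theta_{li}))$: products $\Lambda(y;\theta_j)P(y;\theta_l)$ of a conjunctive logistic with the conjunctive RBF, and products $P(y;\theta_k)P(y;\theta_l)$ of two conjunctive RBFs. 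I would then apply the bilinear approximation theorems already in hand. For $y\notin M_{\bar\Lambda,\bar P}$, Theorems \ref{thm:augSILLconv1} and \ref{thm:augSILLconv2} give $\Lambda(y;\theta_j)P(y;\theta_l)\to H(y;\theta_j,\theta_l)$ exponentially as $\alpha\to\infty$ — the RBF branch when $\mu_{ji}\le\mu_{li}$ for every $i$, the zero branch otherwise — while Theorem \ref{thm:augSILLconv4} gives $P(y;\theta_k)P(y;\theta_l)\to 0$ exponentially. Consequently the RBF--RBF terms all vanish in the limit and the logistic--RBF terms collapse onto $H(y;\theta_j,\theta_l)$, which is precisely the single double sum of Eq. (\ref{eq:augSILL_limApproxRbf}); exponential convergence of each of the finitely many summands gives exponential convergence of the total.

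The main obstacle is the same one lurking behind Corollary \ref{cor:logApprox}: differentiating brings down the steepness factors $\alpha_{li}$, which themselves diverge as $\alpha\to\infty$, so I must check that the exponential decay rates furnished by Theorems \ref{thm:augSILLconv1}, \ref{thm:augSILLconv2} and \ref{thm:augSILLconv4} dominate this (at worst polynomial) growth, i.e.\ that $\alpha_{li}\,\varepsilon_{\Lambda_j P_l}(y)\to 0$ and $\alpha_{li}\,\varepsilon_{P_l P_k}(y)\to 0$ for each fixed $y\notin M_{\bar\Lambda,\bar P}$. Since exponential decay always beats polynomial growth this holds pointwise, but the proof should make it explicit and should note that the excluded set is a single common union of finitely many hyperplanes of Lebesgue measure zero, independent of which pairwise product is being estimated. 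A secondary check is that the affine contribution to $F_i$ — the cross-terms involving $y_{i'}P(y;\theta_l)$ — does not disturb the limit; I would dispose of this exactly as in Part 1.
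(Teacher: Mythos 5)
Your proposal follows essentially the same route as the paper's proof: compute $\left[\nabla_y P(y;\theta_l)\right]_i = \alpha_{li}(1-2\lambda(y_i;\theta_{li}))P(y;\theta_l)$, expand $F_i$ in the augSILL basis via the spanning hypothesis to obtain the bilinear double sum of $\Lambda_j P_l$ and $P_l P_k$ terms, and then invoke Theorems \ref{thm:augSILLconv1}, \ref{thm:augSILLconv2} and \ref{thm:augSILLconv4} to collapse these onto $H(y;\theta_j,\theta_l)$ and $0$ respectively. Your explicit check that the diverging prefactors $\alpha_{li}$ are dominated by the exponential decay of the error terms is a point the paper leaves implicit, but it does not change the argument.
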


Note that Equations (\ref{eq:augSILL_limApproxLog}) and (\ref{eq:augSILL_limApproxRbf}) are not compatible with the Koopman model we seek to learn: $K\psi(y),$ where the dictionary functions, $\psi$, are augSILL functions and $K$ is a real-valued matrix. In Section \ref{sec:augSILL_error} we approximate Equations (\ref{eq:augSILL_limApproxLog}) and (\ref{eq:augSILL_limApproxRbf}) with a mathematical form that is consistent with the approximated Koopman generator, Eq. (7) in \cite{johnson2022SILL}.

One can approximate the Lie derivatives of an augSILL dictionary function as a linear combination of {\em products of pairs} of augSILL functions. The details on approximating these Lie derivatives as a weighted sum of these bilinear terms is detailed in Section \ref{sec:augSILL_error}.

Below, we approximate this weighted sum of bilinear terms (one of the intermediate approximations in Fig. 1 in \cite{johnson2022SILL}) with a weighted sum of augSILL functions.  The final weighted sum is of the form of Eq. (7) in \cite{johnson2022SILL}, and therefore admits a Koopman operator model.

\begin{corollary}\label{cor:logH}
Under the assumptions of Theorem 1 in \cite{johnson2022SILL}, the sum of products
\be\label{eq:cor3pre}
&\sum_{i=1}^{m}\sum_{j=1}^{N_L} \alpha_{li}w_{ij} \Lambda(y;\theta_{l})\Lambda(y;\theta_{j})  \\& +\sum_{i=1}^{m}\sum_{k=N_L+1}^{N_L+N_R} \alpha_{li}w_{ik} \Lambda(y;\theta_{l})P(y;\theta_{k}) \\
\ee
approaches
\be\label{eq:cor3post}
& \sum_{i=1}^{m}\sum_{j=1}^{N_L} \alpha_{li}w_{ij} \Lambda(y;\theta^*)  \\& +\sum_{i=1}^{m}\sum_{k=N_L+1}^{N_L+N_R} \alpha_{li}w_{ik} H(y;\theta_{l},\theta_{k}),
\ee a weighted sum of augSILL functions, exponentially as $\alpha\rightarrow\infty$.
\end{corollary}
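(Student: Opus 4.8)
The plan is to observe that the gap between the sum in Eq.~(\ref{eq:cor3pre}) and the sum in Eq.~(\ref{eq:cor3post}) is, term by term, precisely the collection of bilinear error terms already bounded by Theorem~1 of \cite{johnson2022SILL} and Theorems~\ref{thm:augSILLconv1}--\ref{thm:augSILLconv2}. Fixing the index $l$ of the differentiated logistic and writing $D(y)$ for this difference, linearity gives
\[
D(y)=\sum_{i=1}^{m}\sum_{j=1}^{N_L}\alpha_{li}w_{ij}\,\varepsilon_{\Lambda_l\Lambda_j}(y)+\sum_{i=1}^{m}\sum_{k=N_L+1}^{N_L+N_R}\alpha_{li}w_{ik}\,\varepsilon_{\Lambda_lP_k}(y),
\]
where $\varepsilon_{\Lambda_l\Lambda_j}(y)\triangleq\Lambda(y;\theta_l)\Lambda(y;\theta_j)-\Lambda(y;\theta^*)$ is the error term of Theorem~1 of \cite{johnson2022SILL} (with $\theta^*$ the center parameter that theorem assigns to the pair $(\theta_l,\theta_j)$), and $\varepsilon_{\Lambda_lP_k}(y)$ is the error term defined in Eqs.~(\ref{eq:P_errorterm1})--(\ref{eq:P_errorterm2}). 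Since a finite sum of terms that each vanish exponentially in $\alpha$ again vanishes exponentially, it suffices to control each summand.

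For the logistic--logistic summands I would quote Theorem~1 of \cite{johnson2022SILL} directly: under its hypotheses, for every $y$ outside the associated measure-zero union of center hyperplanes, $|\varepsilon_{\Lambda_l\Lambda_j}(y)|\le C_{lj}(y)e^{-c_{lj}(y)\alpha}$ with $c_{lj}(y)>0$. For the logistic--RBF summands I would case-split on the relative centers, matching the two branches of the decision function $H$: if $\mu_{ki}\ge\mu_{li}$ for every $i$, Theorem~\ref{thm:augSILLconv1} applied to the identity $\varepsilon_{\Lambda_lP_k}=(\Lambda(\cdot;\theta_l)-1)P(\cdot;\theta_k)$ yields an $e^{-c\alpha}$ bound; otherwise there is an $i$ with $\mu_{ki}<\mu_{li}$ and Theorem~\ref{thm:augSILLconv2} applied to $\varepsilon_{\Lambda_lP_k}=\Lambda(\cdot;\theta_l)P(\cdot;\theta_k)$ yields one; both invocations need $y\notin M_{\bar\Lambda,\bar P}$. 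These two cases are exhaustive, so every logistic--RBF summand is $O(e^{-c\alpha})$ off a measure-zero set, and the union over the finitely many index pairs of all these exceptional sets is still measure zero.

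The one point that genuinely requires care --- and the part I expect to be the main obstacle --- is that the prefactors $\alpha_{li}$ are \emph{not} constants: they grow with the steepness scaling as $\alpha\to\infty$, so each summand is a product of a factor growing at most linearly in $\alpha$ with a factor decaying like $e^{-c\alpha}$. I would dispatch this by noting that the weights $w_{ij},w_{ik}$, which encode the assumed augSILL representation of the vector field, are fixed independently of the steepness scaling, while $|\alpha_{li}|=O(\alpha)$; hence each summand is $O(\alpha\,e^{-c\alpha})$, and since $\alpha e^{-c\alpha}\le e^{-(c-\delta)\alpha}$ for any $0<\delta<c$ once $\alpha$ is large, it still tends to zero exponentially. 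Summing the finitely many summands then gives $D(y)\to0$ exponentially, i.e.\ Eq.~(\ref{eq:cor3pre}) converges exponentially to Eq.~(\ref{eq:cor3post}); and because each limiting term $\Lambda(y;\theta^*)$ and $H(y;\theta_l,\theta_k)$ is an augSILL dictionary function (with $H$ equal to a conjunctive RBF or the trivially available zero function), the limit is the asserted weighted sum of augSILL functions.
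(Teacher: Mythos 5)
Your proof is correct and follows essentially the same route as the paper, whose own proof is a one-line citation of Theorem 1 of \cite{johnson2022SILL} together with Theorems \ref{thm:augSILLconv1} and \ref{thm:augSILLconv2}, exactly the decomposition into $\varepsilon_{\Lambda_l\Lambda_j}$ and $\varepsilon_{\Lambda_lP_k}$ summands that you spell out. Your additional observation that the prefactors $\alpha_{li}$ themselves grow under the steepness scaling, so each summand is only $O(\alpha\,e^{-c\alpha})$ and one must absorb the linear factor into a slightly smaller exponential rate, addresses a point the paper's one-line argument passes over in silence, and you handle it correctly.
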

\begin{proof}
This result is a direct consequence of Theorem 1 of \cite{johnson2022SILL}, Theorem \ref{thm:augSILLconv1} and Theorem  \ref{thm:augSILLconv2}. These approximations are outlined in Fig. \ref{fig:rbftsig} \textbf{(a)}, \textbf{(c1)} and \textbf{(c2)}.
$\blacksquare$\end{proof}

\begin{corollary}\label{cor:Hrbf}
Under the assumptions of Theorem 1 in \cite{johnson2022SILL}, the sum of products
\be\label{eq:cor4pre}
&\sum_{i=1}^{m}\sum_{j=1}^{N_L} \alpha_{li}w_{ij} P(y;\theta_{l})\Lambda(y;\theta_{j})  \\& +\sum_{i=1}^{m}\sum_{k=N_L+1}^{N_L+N_R} \alpha_{li}w_{ik} P(y;\theta_{l})P(y;\theta_{k}) \\
\ee
approaches
\be\label{eq:cor4post}
& \sum_{i=1}^{m}\sum_{j=1}^{N_L} \alpha_{li}w_{ij} H(y;\theta_{j},\theta_{l}),  
\ee a weighted sum of conjunctive RBFs, exponentially as $\alpha\rightarrow\infty$.
\end{corollary}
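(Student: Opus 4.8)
The plan is to mirror the proof of Corollary~\ref{cor:logH}, reducing the claim to the bilinear approximation theorems already in hand. First I would separate the sum in Eq.~(\ref{eq:cor4pre}) into its two double sums: the logistic--RBF cross terms $\alpha_{li}w_{ij}P(y;\theta_l)\Lambda(y;\theta_j)$ over $j\in\{1,\dots,N_L\}$, and the RBF--RBF terms $\alpha_{li}w_{ik}P(y;\theta_l)P(y;\theta_k)$ over $k\in\{N_L+1,\dots,N_L+N_R\}$. Since ordinary products commute, I would rewrite $P(y;\theta_l)\Lambda(y;\theta_j)=\Lambda(y;\theta_j)P(y;\theta_l)$, so that each cross term is exactly of the shape handled by Theorems~\ref{thm:augSILLconv1} and~\ref{thm:augSILLconv2}, now with $\theta_j$ playing the role of the conjunctive logistic center and $\theta_l$ that of the conjunctive RBF center.

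For the first double sum I would apply the dichotomy in those theorems termwise. If $\mu_{ji}<\mu_{li}$ for every $i$, Theorem~\ref{thm:augSILLconv1} gives $\Lambda(y;\theta_j)P(y;\theta_l)\to P(y;\theta_l)$ exponentially; in the remaining case Theorem~\ref{thm:augSILLconv2} gives $\Lambda(y;\theta_j)P(y;\theta_l)\to 0$ exponentially. By the definition of the decision function these two limits are precisely $H(y;\theta_j,\theta_l)$, so the $(i,j)$ term converges to $\alpha_{li}w_{ij}H(y;\theta_j,\theta_l)$, which is the summand of Eq.~(\ref{eq:cor4post}). For the second double sum, Theorem~\ref{thm:augSILLconv4} applies directly: each $P(y;\theta_l)P(y;\theta_k)\to 0$ exponentially whenever $y\notin M_{\bar\Lambda,\bar P}$, so the entire RBF--RBF block vanishes in the limit, consistent with its absence from Eq.~(\ref{eq:cor4post}) (zero being trivially an augSILL observable).

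The last step is to combine the finitely many termwise estimates into a single exponential bound on the difference between Eq.~(\ref{eq:cor4pre}) and Eq.~(\ref{eq:cor4post}). The only subtlety here---and the step I expect to be the one real obstacle---is that each coefficient $\alpha_{li}$ itself grows linearly in the steepness, so a single scaled error term behaves like $\alpha_{li}\,C e^{-c\alpha}$ rather than $C e^{-c\alpha}$; since exponential decay dominates linear growth, this product still tends to zero and can be re-bounded by $C' e^{-c'\alpha}$ for any $c'<c$. Passing to the minimum rate over the finitely many terms---and recalling that the hypotheses ($y\notin M_{\bar\Lambda,\bar P}$ together with positivity of all steepness parameters, inherited from the assumptions of Theorem~1 in~\cite{johnson2022SILL}) are exactly what the cited theorems require---then yields exponential convergence of the whole sum, completing the argument.
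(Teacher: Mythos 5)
Your proposal is correct and follows essentially the same route as the paper, which proves this corollary simply by citing Theorems~\ref{thm:augSILLconv1}, \ref{thm:augSILLconv2} and \ref{thm:augSILLconv4} applied termwise to the two double sums. Your additional observation that the $\alpha_{li}$ prefactors grow with the steepness, and that the exponential decay still dominates this linear growth, is a detail the paper's one-line proof leaves implicit but handles no differently in substance.
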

\begin{proof}
This result is a direct consequence of Theorems \ref{thm:augSILLconv1}, \ref{thm:augSILLconv2} and \ref{thm:augSILLconv4}.  These approximations are outlined in Fig. \ref{fig:rbftsig} \textbf{(c1)}, \textbf{(c2)} and \textbf{(b)}.
$\blacksquare$\end{proof}

The resulting linear combinations from Corollaries \ref{cor:logH} and \ref{cor:Hrbf} can be stacked and combined into the matrix $K$ (whose entries would be the products of $\alpha_{**}$ and $w_{**})$. Thus, given an augSILL dictionary that admits a weighted bilinear approximation to the Lie derivatives of its functions, Corollaries \ref{cor:logH} and \ref{cor:Hrbf} show the uniform finite approximate closure of that dictionary by showing that the error of approximating the weighted bilinear representation with a linear combination of dictionary functions goes to zero exponentially in steepness. 

 Corollaries \ref{cor:logApprox}, \ref{cor:rbfApprox}, \ref{cor:logH} and \ref{cor:Hrbf} each imply an error bound for approximating the Lie derivative of an augSILL dictionary function with an intermediate approximation. These error bounds are listed explicitly in Table \ref{tab:ovarallSummary}. They combine with other error bounds in the same table to demonstrate uniform finite approximate closure following the argument given in Section 6.2 of \cite{johnson2022SILL}.

\begin{table*}[ht]
    \centering
    \begin{tabular}{|p{6mm}|p{8mm}|p{50mm}|p{62mm}|p{28.5mm}|}
    \hline
       \textbf{Fun.} & \textbf{Ref.}  &  \textbf{Approximation} & \textbf{Difference (Error)} & \textbf{Error Bound} \\
    \hline
       Log.  & (\ref{eq:augSILL_limApproxLog})   & \(\begin{aligned} &\sum_{i=1}^{m}\sum_{j=1}^{N_L} \alpha_{li}w_{ij} \Lambda(y;\theta^*)  \\ &+\sum_{i=1}^{m}\sum_{k=N_L+1}^{N_L+N_R} \alpha_{li}w_{ik} H(y;\theta_{l},\theta_{k}) \end{aligned}\) & \(\begin{aligned} &\sum_{i=1}^{m}\sum_{j=1}^{N_L} \alpha_{li}w_{ij} \lambda(y_i;\theta_{li}) \Lambda(y;\theta^*) \\& +\sum_{i=1}^{m}\sum_{k=N_L+1}^{N_L+N_R} \alpha_{li}w_{ik} \lambda(y_i;\theta_{li}) H(y;\theta_{l},\theta_{k}) \end{aligned}\)  & \(\begin{aligned}&\sum_{i=1}^{m}\sum_{j=1}^{N_L}\frac{\nu_{ij}}{2^{m+1}} \\&+ \sum_{i=1}^{m} \sum_{k=N_L+1}^{N_L+N_R} \frac{\nu_{ik}}{2^{3m+1}}\end{aligned}\) \\
    \hline
        Log. & (\ref{eq:augSILL_logPrime})  & \(\begin{aligned}  &\sum_{i=1}^{m}\sum_{j=1}^{N_L} \alpha_{li}w_{ij} \Lambda(y;\theta_l)\Lambda(y;\theta_j)  \\ &+\sum_{i=1}^{m}\sum_{k=N_L+1}^{N_L+N_R} \alpha_{li}w_{ik} \Lambda(y;\theta_{l})P(y;\theta_{k}) 
    \end{aligned}\)  & \(\begin{aligned} 
&\sum_{i=1}^{m}\sum_{j=1}^{N_L} \alpha_{li}w_{ij} \lambda(y_i;\theta_{li})\Lambda(y;\theta_l)\Lambda(y;\theta_j)  \\ &+\sum_{i=1}^{m}\sum_{k=N_L+1}^{N_L+N_R} \alpha_{li}w_{ik} \lambda(y_i;\theta_{li})\Lambda(y;\theta_{l})P(y;\theta_{k})
\end{aligned}\)  & \(\begin{aligned}&\sum_{i=1}^{m}\sum_{j=1}^{N_L}\frac{\nu_{ij}}{2^{2m+1}} \\&+ \sum_{i=1}^{m}\sum_{k=N_L+1}^{N_L+N_R}\frac{\nu_{ik}}{2^{3m+1}}\end{aligned}\) \\
    \hline
        RBF  & (\ref{eq:augSILL_limApproxRbf})  & \(\begin{aligned} &\sum_{i=1}^{m}\sum_{j=1}^{N_L} \alpha_{li}w_{ij} H(y;\theta_{j},\theta_{l})\end{aligned}\) & \(\begin{aligned} \sum_{i=1}^{m}&\sum_{j=1}^{N_L} 2\alpha_{li}w_{ij} \lambda(y_i;\theta_{li}) H(y;\theta_{j},\theta_{l}) \end{aligned}\) & \(\begin{aligned}\sum_{i=1}^{m}\sum_{j=1}^{N_L}\frac{\nu_{ij}}{2^{3m+1}}\end{aligned}\)  \\
    \hline
       RBF   & (\ref{eq:augSILL_rbfPrime})  & \(\begin{aligned} &\sum_{i=1}^{m}\sum_{j=1}^{N_L} \alpha_{li}w_{ij} \Lambda(y;\theta_{j})P(y;\theta_{l}) \\&+\sum_{i=1}^{m}\sum_{k=N_L+1}^{N_L+N_R} \alpha_{li}w_{ik} P(y;\theta_i)P(y;\theta_k)\end{aligned}\) & \(\begin{aligned}
&\sum_{i=1}^{m}\sum_{j=1}^{N_L} \alpha_{li}w_{ij} \lambda(y_i;\theta_{li})\Lambda(y;\theta_{j})P(y;\theta_{l}) \\&+\sum_{i=1}^{m}\sum_{k=N_L+1}^{N_L+N_R} \alpha_{li}w_{ik} \lambda(y_i;\theta_{li})P(y;\theta_i)P(y;\theta_k)
\end{aligned}\)  & \(\begin{aligned}&\sum_{i=1}^{m}\sum_{j=1}^{N_L}\frac{\nu_{ij}}{2^{3m+1}} \\&+ \sum_{i=1}^{m}\sum_{k=N_L+1}^{N_L+N_R}\frac{\nu_{ik}}{2^{4m+1}}\end{aligned}\) \\
    \hline
    \end{tabular}
    \caption{Approximations to and properties of error bounds for the four equations referred to in the \textbf{Ref.} column. The reference equation is approximated as the corresponding equation in the \textbf{Approximation} column. We give the error of this approximation in the \textbf{Difference (Error)} column. The \textbf{Error Bound} column gives a bound on this error. The \textbf{Description} column refers to the type of dictionary function approximated in the row. The right side of Fig. 1 in part 1 of this paper \cite{johnson2022SILL} shows where these approximations fit into showing uniform finite approximate closure. The integer $m$ is the dimension of the measurements, $y$.}
    \label{tab:linearityErrorAugSILL}
\end{table*}

\subsection{Expectation of Approximation Error Vanishes}\label{sec:augSILL_error}

This section simultaneously addresses the approximation of two related mathematical objects.
\begin{enumerate}
    \item It addresses the approximation of Equation (\ref{eq:augSILL_limApproxLog}) with (\ref{eq:cor3post}) and Equation (\ref{eq:augSILL_limApproxRbf}) with Equation (\ref{eq:cor4post}).  This is the lower left step in the right side of Fig. 1 in \cite{johnson2022SILL}.
    \item It also addresses the approximation of Equations (\ref{eq:cor3pre}) and (\ref{eq:cor4pre}) with the Lie derivatives of conjunctive logistic and RBFs respectively. This is the upper right step in the right side of Fig. 1 in \cite{johnson2022SILL}.
\end{enumerate}
In total, there are four distinct approximations, see Table \ref{tab:linearityErrorAugSILL}. Each case, as illustrated in Fig. 1 in \cite{johnson2022SILL}, our approximation is a step closer to the Koopman model. We either go from: \begin{enumerate}
    \item an intermediate approximation of the Lie derivative that is a nonlinear combination of dictionary functions to a linear combination of the same dictionary functions, or
    \item the Lie derivative itself to a bilinear intermediate approximation. 
\end{enumerate} 

To understand our approximation error we compute the expected values of a single dimensional logistic and RBF. We do so with parameters and measurement values sampled from uniform distributions defined on the interval $[-a,a]$.  We choose this statistical model for how our data and parameters are sampled, because  1) the data and parameters are assumed to belong to a bounded continuum, and 2) the uniform distribution is the maximum entropy distribution for a continuous random variable on a finite interval.  Since our error terms are weighted sums of products of these functions we, under the assumption of independence, estimate the expected value of our error terms via the linearity and product rule of expectation. 

We cannot explicitly compute the probability density function (PDF) of our logistic and RBFs, so, we compute the values of these integrals numerically. Intermediate steps and details of this approximation are in Section B of the Appendix in \cite{johnson2022SILL}. The only difference is that a RBF is substituted for a logistic function. In Fig. \ref{fig:expectedVals} we show their calculated expected values and variances for symmetric uniform distributions with different values of $a$. 

We find that the expected value of a logistic function will be $1/2$ (see Fig. \ref{fig:expectedVals}). Its variance, as we sample in a wider interval, tend to the functional extremes of zero and one.  This is favorable for the linearity of our approximation since, for all $\varepsilon\in (0, 0.5]$, $(0.5-\varepsilon)(0.5+\varepsilon) = 0.25 - \varepsilon^2 < 0.25 = (0.5)^2$. So, products of more extreme samples are lower in value than products of samples near the expected value. The expected value of an RBF will be no greater than $1/4$, and it decreases as $a$ increases.% the size of the sampling interval.  %This means that approximations built from mixed dictionaries that also include conjunctive RBFs, in our bound on expectation, can be significantly closer to linear combinations of the dictionary functions themselves than approximations that use a dictionary of only logistic functions. 

\begin{figure}[ht]
    \includegraphics[width=\linewidth ]{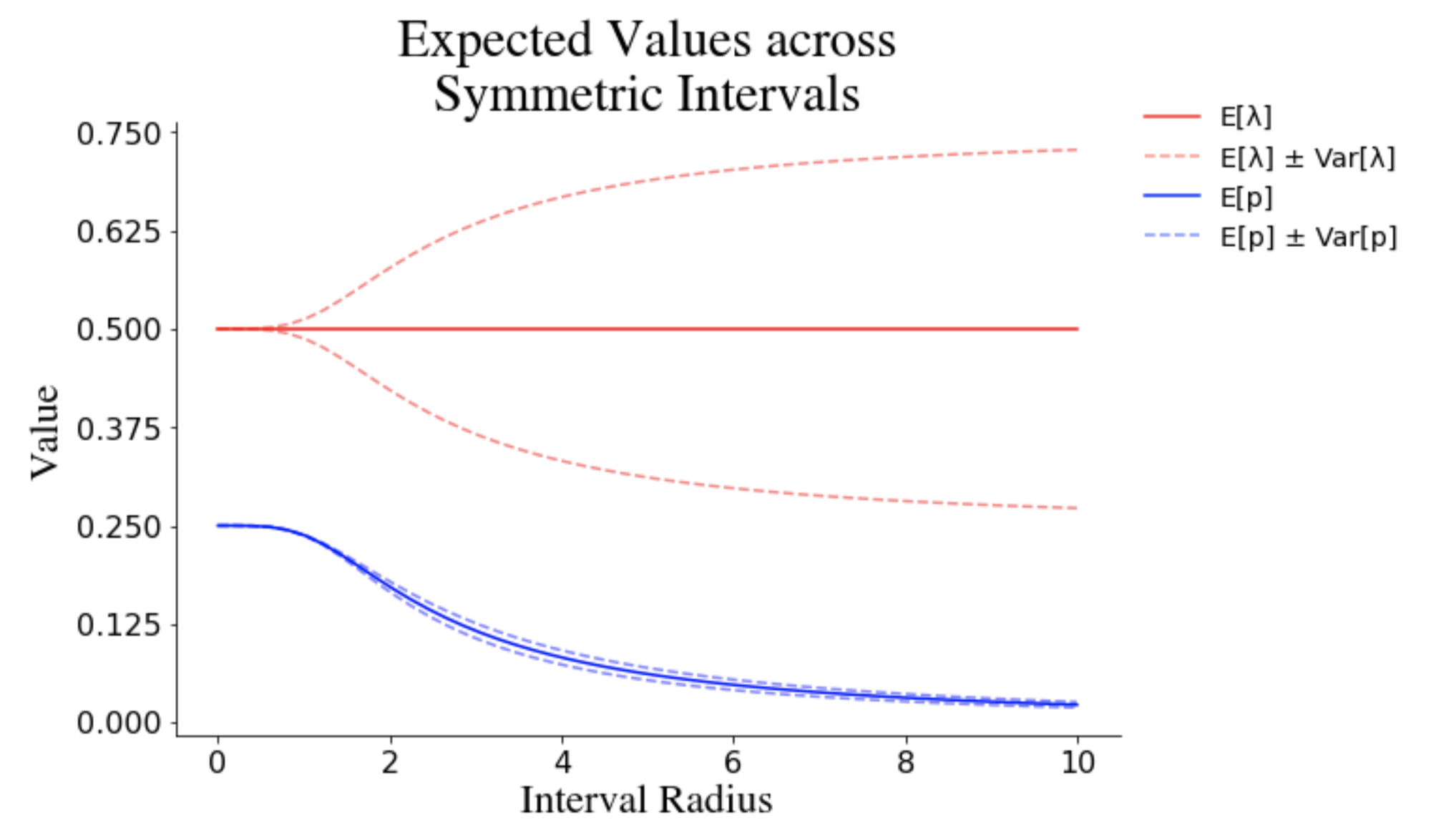}
    \caption{Expected values and variances of logistic and RBFs with parameters and measurement values sampled from symmetric uniform distributions of various interval radii. Note that the expected value of a logistic function is always $1/2$, and that of the RBF is bounded above by $1/4$.}
    \label{fig:expectedVals}
\end{figure}

%We estimate the average error value by randomly sampling state and parameter values from a uniform distribution on a number of symmetric intervals. We then extrapolate how the error behaves as we increase the number of measurements. To dissect our results we first plot the distributions of individual one dimensional logistic and RBFs.  %We then plot the distributions of single sum terms in Equations (\ref{eq:augSILL_LinearApproxError1}), (\ref{eq:augSILL_LinearApproxError2}), (\ref{eq:augSILL_LinearApproxError3}) and (\ref{eq:augSILL_LinearApproxError4}) under the assumption that $w_{ij}=1$. A plot of the distributions of the terms of Eq. (\ref{eq:augSILL_LinearApproxError1}) and (\ref{eq:augSILL_LinearApproxError3}) is shown in Fig. \ref{fig:dist_sumTerms}. A plot of the distributions of the terms of Eq. (\ref{eq:augSILL_LinearApproxError2}) and (\ref{eq:augSILL_LinearApproxError4}) is shown in Fig. \ref{fig:dist_sumProdTerms}.  We see that the distributions of these terms suggest that as the system dimension increases we can expect the error term to exponentially decrease. 
%The final error term depends on the dictionary size, as well as the weights of the Koopman approximation of the vector field $F$.  

We approximate a single term in the sum of the error function and extrapolate via the sum and product rule of expectation under the assumption of independence to see how nearly linear our approximation is (see Section B in \cite{johnson2022SILL} of the Appendix). We can conservatively bound the expectation of approximation error as a product that decreases exponentially with the number of measurements.  The error bounds for a conjunctive logistic and RBF are \be\label{eq:CEBlogRBF} E[\Lambda]<\frac{1}{2^m}\mbox{ and } E[P]<\frac{1}{4^m}=\frac{1}{2^{2m}}.\ee Since, $H$ will be a conjunctive RBF in $\frac{1}{2^m}^{th}$ of the measurement-parameter space, its weight of decrease can be bounded by \be\label{eq:CEBH} E[H]<\left(\frac{1}{2^{2m}}\right)\left(\frac{1}{2^m}\right)=\frac{1}{2^{3m}}.\ee
We record the full error terms and bounds in Table \ref{tab:linearityErrorAugSILL}.

These bounds are combined with the bounds that arise from Corollaries 1, 2, 3 and 4 to bound the approximation error of a Koopman model using the augSILL dictionary. A summary of all these bounds is given explicitly in Table \ref{tab:ovarallSummary} of the Appendix.

%Note that these error bounds are steeper than those computed for the SILL basis in Appendix \ref{sec:averageErrorSILL}. %When we compare these to the strictly SILL error bounds in Eq. (\ref{eq:SILLerrorBounds}) of the Appendix, we note that the the strictly SILL basis has a lower rate of convergence to zero. This is the case because 
%\begin{enumerate}
%    \item any approximation of the vector field, $F$, using conjunctive RBFs has a equal or higher rate of convergence than one strictly using conjunctive logistic functions with our conservative error bounds, see Equations (\ref{eq:CEBlog}), (\ref{eq:CEBrbf}) and (\ref{eq:CEBH}), and
%    \item the augSILL basis will need equal or fewer basis elements to span the vector space. So $N$ for an augSILL model will be less than or equal to $N$ for a SILL model.
%\end{enumerate}  

%Both the SILL and augSILL dictionaries share error convergence properties related to the steepness of the dictionary functions, but the nonlinear combination of augSILL basis observables includes conjunctive RBFs. The conjunctive RBFs, in our bound on expectation, are much closer to zero. So, using the augSILL basis to approximate the Lie derivative of any given dictionary function gives a combination of these observables that is well approximated by a model of the form $K\psi(y)$.  Thus, by our analysis, the closure of models built from the augSILL basis is as good or better than that of the non-mixed SILL basis. 

In summary, the error, $\epsilon_l(y)$, is as well or better behaved for augSILL than SILL dictionaries. So, there is a uniform bound on $\epsilon_l$, $B>0$ for augSILL dictionaries, much like there is for a standard SILL dictionary.  This means that augSILL models must satisfy uniform finite approximate closure. 

It is reasonable to assume that (1) the measurements do not belong to the set $M_{\bar\Lambda,\bar P}$, a subset of $\R^m$ of measure zero, and (2) the dictionary parameters and Koopman approximation matrix are bounded. Under these assumptions, augSILL dictionaries, in the limit of an increasing number of measurements and increasing steepness of their dictionary functions, have that their bounding constant, $B>0$, approaches zero in expectation, $B\rightarrow 0$. Since the uniform bound on the error of this model can be arbitrarily small, the augSILL dictionaries can be used to build accurate Koopman models for any dynamic system of the from of Eq. (1) in \cite{johnson2022SILL}.

\section{Numerical Examples}\label{sec:numerical}

To come full circle, we need to reexamine deepDMD and compare it to heterogeneous dictionary models. DeepDMD uses a feedforward neural network to simultaneously parameterize the matrix $K$ as well as the dictionary functions $\psi(y)$. The deepDMD algorithm has built accurate predictive models and its dimension ($N\in\mathbb{Z}^+$) scales well with that of the modeled system \cite{yeung2019learning}. We build a novel comparison of deepDMD and a much lower-parameter model built from the augSILL basis.  The lower parameter augSILL model learns as quickly and accurately as the deep-learning-based model.

To explain why algorithms like EDMD have variable success, we also contribute a head to head comparison of five dictionaries for Koopman learning.  We see the deep-learning-inspired dictionaries vastly outperform orthogonal polynomial dictionaries. This suggests that issues with algorithms like EDMD may be resolved by selecting a dictionary proven to satisfy uniform finite approximate closure.

Unless specified otherwise we use simulated data generated from uniformly distributed initial states run with SciPy's ODE integration software. Each of these results is concerned with the discrete-time, data-driven problem statement. The system's state is directly measured at even time intervals.

\begin{figure*}[ht]
    \centering
    \includegraphics[width=0.7\linewidth ]{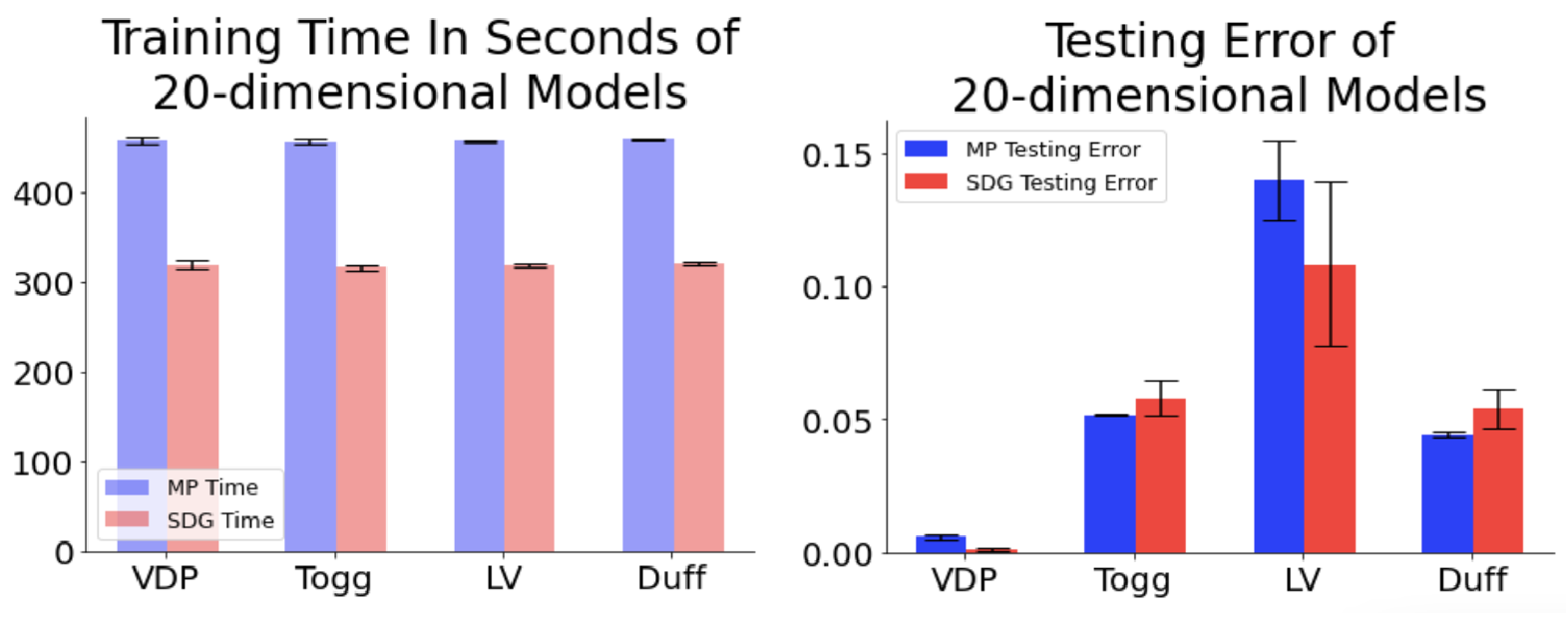}
    \caption{A comparison of the full matching pursuit algorithm to SGD on four nonlinear systems. We plot 5-step prediction error. The SGD algorithm (in red) shows consistently better temporal scaling as well as comparable 5-step prediction error.}
    \label{fig:mp_vs_sgd}
\end{figure*}

\subsection{Choosing center parameters}
Even when a dictionary class and model dimension are selected, each individual problem will warrant a unique parameterization of the dictionary. Given a dictionary, such as the augSILL dictionary, how do we choose the parameters of each dictionary function? We consider two algorithms, matching pursuit and stochastic gradient descent (SGD).

%\subsubsection{Hypothesize Dictionary Functions and Verify}
%The original technique proposed to go along with the EDMD algorithm was to hypothesize dictionary functions, select a dictionary, and see how it did.  This was also the technique used in \cite{johnson2018class} to learn low-dimensional nonlinear models with the SILL dictionary. Because this method is tedious, the typical approach is to simply choose a massive set of dictionary functions, learn a tiny fraction of the dynamic modes, and use them to construct a low-rank model.

Matching pursuit \cite{mallat1993matching} considers an expansive list of potential dictionary functions and greedily adds the function that lowers the value of the objective function (Eq. (\ref{eq:objective})) most. Matching pursuit adds one function at a time to the model.

We use SGD to attack a host of non-convex optimization problems. It is famous, in part, because of its use in training artificial neural networks. SGD can be directly applied to parameterize a fixed number of dictionary functions from data, much like it learns the parameters of a neural network.

\subsubsection{Which algorithm do we choose?}\label{sec:numerical-comp}
We compared two variations of matching pursuit, as well as SGD for learning augSILL models of four dynamic systems, the Van der Pol oscillator, the Duffing oscillator, the Lokta-Volterra model and the Gardner-Collins toggle switch. The specific parametrization of these systems is given in Section \ref{sec:AppendixSystems} of the Appendix.  The measurements for each system are the state variables themselves.

We found that the full matching pursuit algorithm and SGD had similar performance for a 20 dimensional Koopman operator (see Fig. \ref{fig:mp_vs_sgd}).  %The 5-step prediction error of SGD was comparable to matching pursuit across the four systems for a 20 dimensional Koopman operator. %We chose 5-step prediction error, as opposed to 1-step prediction error, as it better indicates how the model will generalize. Matching pursuit usually outperformed SGD in 5-step prediction error for the 5 dimensional models. 
We focus on accuracy and performance for 20 dimensional models as a step-in for modeling higher dimensional systems.  Also, we note that SDG was about 1.5 times faster when building a 20 dimensional model. Since SGD was the better choice for building larger Koopman models in terms of time to execute, and performed comparably in 5-step prediction error, we compare this algorithm to deepDMD. Since deepDMD utilizes SGD, we can compare model accuracy at each training epoch.

\begin{figure}[ht]
    \centering
    \includegraphics[width=\linewidth ]{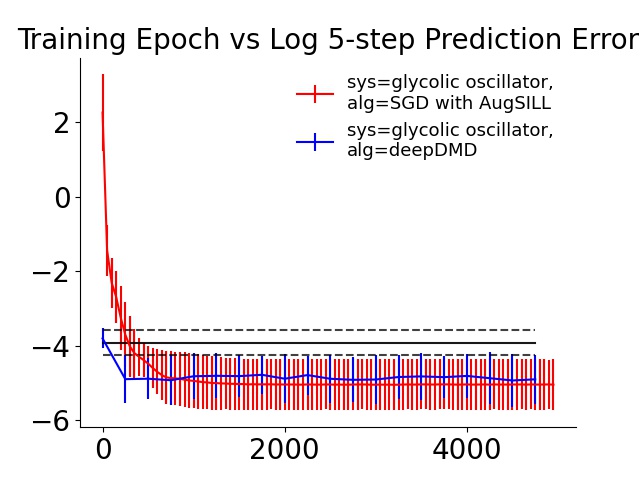}
    \caption{A comparison deepDMD to SGD with the augSILL basis on the seven dimensional model of glycolysis given in \cite{daniels2015efficient}. The solid horizontal bar is the mean performance of the DMD algorithm, the dotted bars are one standard deviation above and below this mean. The spines are error bars for the measured epochs.}%HOW ARE THE ERROR BARS COMPUTED? SHOULD I UPDATE THIS SO THAT THE SAME ERROR BAR IS USED FOR DMD?
    \label{fig:DeepDMDvsSGD}
\end{figure}

\subsection{AugSILL basis vs deepDMD}
The SILL and augSILL dictionaries are a targeted study of the dictionary functions generated from deepDMD. Our visualization of deepDMD observables showed a convergence to sums of SILL and augSILL dictionary observables. Can we use these dictionaries to build a model on par with deep learning?

To challenge ourselves, we used a seven dimensional glycolysis model to generate our testing and training data \cite{daniels2015efficient}. This data was all generated from a single initial condition $x_0=[1,0.19,0.2,0.1,0.3,0.14,0.05]$.  The augSILL model reaches a comparable 5-step prediction error to deepDMD in under 1000 training epochs and neither significantly changes over the next 4000 epochs, see Fig. \ref{fig:DeepDMDvsSGD}. Note that the model we learn using the augSILL basis has $995$ parameters.  All in all, the deepDMD model has a total of $3,949$ parameters. All of the augSILL parameters are easily interpreted as center, steepness and weight parameters of a logistic or RBF.

\begin{figure*}[ht]
    \centering
    \includegraphics[width=.85\linewidth ]{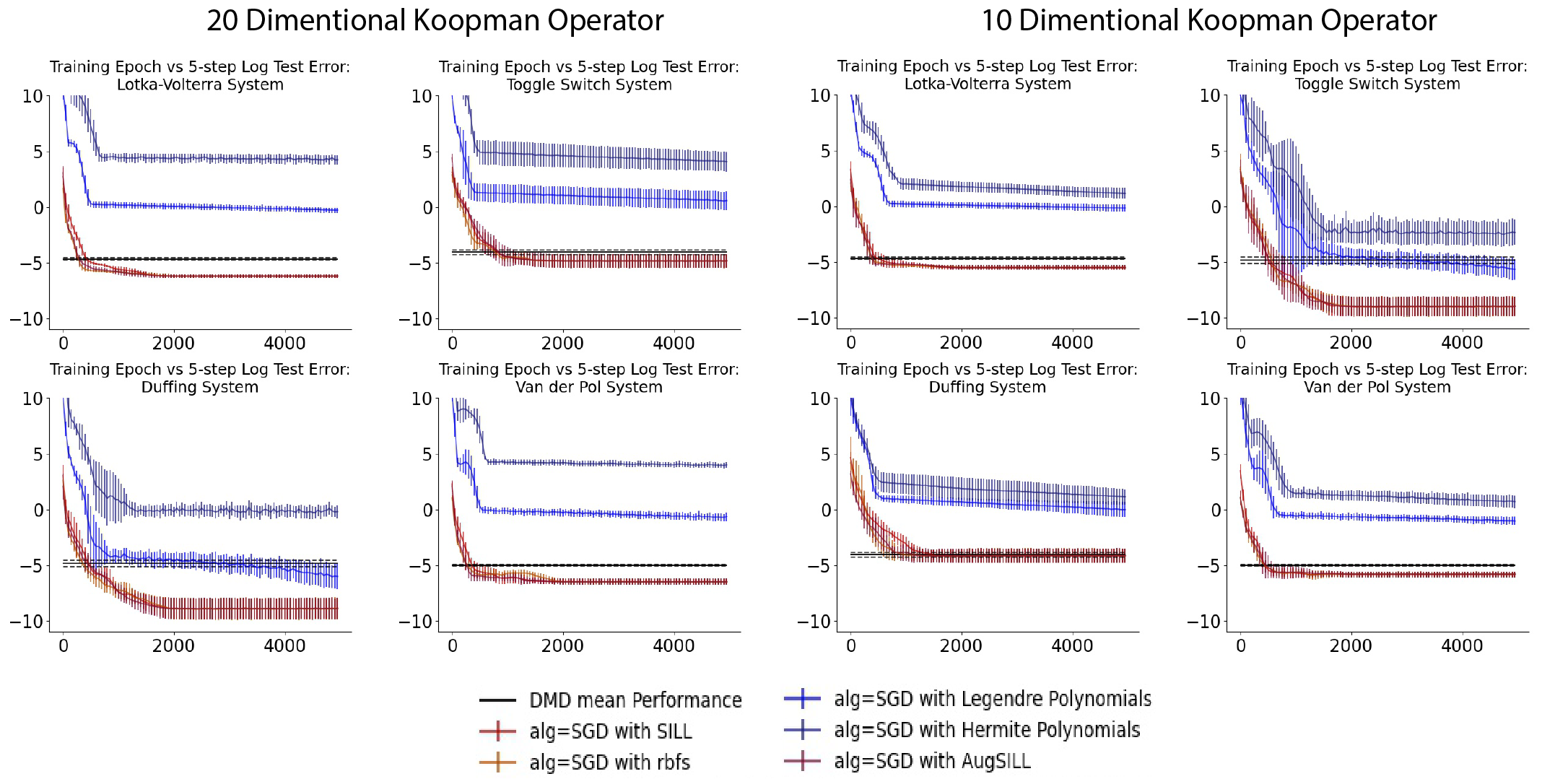}
    \caption{Five-step prediction accuracy vs training epoch of Koopman learning with SGD using various dictionaries. Plotted on a log scale. Note that the Hermite and Legendre polynomial basis (in blue) have greater error throughout the training process. The spines are error bars at each epoch where 5-step error was measured.} %HOW WAS THE ERROR CALCULATED?
    \label{fig:algComp}
\end{figure*}

\subsection{Comparison of dictionaries for Koopman learning}
Now we address the relationship between the choice of dictionary and the success of a Koopman model.  What properties do successful Koopman dictionaries have in common?  The augSILL basis performed on par with deepDMD.  Would we have gotten similar results using other dictionaries?% Do models with closure properties perform comparably?

We learn the systems in Eq. (\ref{eq:vanDerPol}), (\ref{eq:duff}), (\ref{eq:pred}), and (\ref{eq:togg}), parameterized with the SGD algorithm. We do so with the SILL, augSILL and summed one-dimensional RBFs, as well as two different orthogonal polynomial dictionaries (Legendre and Hermite). We compare their 5-step average prediction error for a 5, 10 and 20 dimensional Koopman model of each system. % to insure that the model capacity is sufficient to capture the nonlinearties in the system. (Address the potential for having a higher model bias.)

For the 5 dimensional models the choice of dictionary seemed mostly irrelevant. However, for a 5 dimensional model, SGD only outperformed standard DMD for the toggle switch. So, we don't have enough system dimensions with these dictionaries to want to use SGD in the first place. %SGD clearly outperforms DMD for the 10 and 20 dimensional models.

Building a 10 and 20 dimensional Koopman model of each system, we find that the basis inspired from the outputs of deepDMD (SILL, augSILL and summed RBFs) have lower 5-step prediction errors (see Fig. \ref{fig:algComp}).  Each of these three basis had nearly identical errors, though the SILL basis can take more training iterations to perform comparably. The Legendre polynomial basis outperformed the Hermite polynomial basis in each case.  For the 20 dimensional model of the Van der Pol system the augSILL basis was over 372 times more accurate than the Legendre polynomial basis, for the Duffing Oscillator it was over 221 times more accurate, for the Predator-Prey System it was over 18 times as accurate and for the Toggle Switch it was over 328 times as accurate.  %Though it is not shown in Fig. \ref{fig:algComp} we found them to have lower 1-step prediction errors as well.  Short term success in prediction contributed to success in multi-step prediction accuracy. 

% Conclusion gets jargon, but some without too, in case they missed it...

% First paragraph: Point out the pain. Why do you need healing, from my paper?
%
% Interpretability**
%
\section{Conclusion} 
In this paper, we demonstrate that, even in a data-driven setting, one can glean real value from mixing basis functions and forming a heterogeneous dictionary for Koopman learning.  This is an important insight as current dictionary based Koopman learning typically works with homogeneous dictionaries. The success of heterogeneous dictionaries helps to explain the success of the deepDMD algorithm, and may extend to other deep-learning-based Koopman learning algorithms \cite{wehmeyer2018time, lusch2018deep, otto2019linearly}. 

The mixed-basis models can scale efficiently both in model parameters and model dimension as the number of measurements used to construct the model grows. In this paper we demonstrated this point by constructing a heterogeneous dictionary, the augSILL dictionary, as a general solution to the data-driven Koopman learning problem. We then demonstrated the {\em uniform finite approximate closure} of this dictionary and showed it, in numerical simulations, on a number of distinct learning problems.

In under 1000 training epochs augSILL models matched the accuracy of trained deepDMD models. These models were anywhere from 18 to 372 times as accurate as models made using the polynomial dictionaries, which have no guarantees of uniform finite approximate closure. The augSILL dictionary performs like deepDMD to learn a dynamic system from data using an order of magnitude fewer parameters.  Further, the augSILL dictionary is fully specified with closed-form analytical expressions (unlike deepDMD) and as a consequence of the theoretical results in this paper, satisfies a unique numerical property of uniform approximate finite closure.  

Our methodology provides a template for understanding how deep neural networks successfully approximate governing equations \cite{brunton2016discovering}, the action of operators and their spectra \cite{mezic2005spectral}, and dynamical systems \cite{mezic2013analysis}.  Further, these results provide a pattern for improving scalability and interpretability of dictionary-based learning models for dynamical system identification. In studying the dictionary functions learned by deepDMD we found that the algorithm tended to learn redundant observables when the Koopman model was too high dimensional.  A future line of work should explore this model-reduction property of DeepDMD, and see if augSILL models do the same.

\begin{ack}
Any opinions, findings and conclusions or recommendations expressed in this material are
those of the author(s) and do not necessarily reflect the
views of the Defense Advanced Research Projects Agency
(DARPA), the Department of Defense, or the United States
Government. This work was supported partially by a Defense Advanced Research Projects Agency (DARPA) Grant
No. FA8750-19-2-0502, PNNL Grant No. 528678, ICB Grant No. W911NF-19-D-0001 and No. W911NF-19-F-0037, and ARO Young Investigator Grant No. W911NF-20-1-0165.
\end{ack}

\bibliographystyle{plain}        % Include this if you use bibtex 
\bibliography{bibliography}           % and a bib file to produce the 
                                 % bibliography (preferred). The
                                 % correct style is generated by
                                 % Elsevier at the time of printing.

\appendix
\section{Notation}\label{sec:AppendixNotation}
Our models, throughout this article, will have two classes of parameters. Each class refers to distinct geometric properties. To create a clearer separation of function variables and parameters we will use the following notation \[\Eta(y;\mu_l,\alpha_l).\]  In this notation, $\Eta$ is a function whose variable is the vector $y$ and whose parameters are vectors $\mu_l$ and $\alpha_l$. The vectors $\mu_l$ and $\alpha_l$ refer to the geometrically distinct classes of parameters, $\mu$ is for center parameters and $\alpha$ is for steepness parameters.  A second example would be \[\eta(y_i;\mu_{li}, \alpha_{li}).\] In this notation, $\eta$ is a function whose variable is the scalar $y_i$, and whose parameters are the scalars $\mu_{li}$ and $\alpha_{li}$.  To make our equations less cumbersome we summarize all the distinct parameters with the label $\theta$. For example, \[\Eta(y;\mu_l,\alpha_l)\triangleq \Eta(y;\theta_l),\] and \[\eta(y_i;\mu_{li}, \alpha_{li})\triangleq \eta(y_i;\theta_{li}).\]

In general, our notation uses the following conventions. \begin{enumerate}
    \item Integer $i$ will be an index of measurement dimension.
    \item Integer $j$ will be an index of the first group of added dimensions.
    \item Integer $k$ will be an index of the second group of added dimensions.
    \item Integer $l$ will be an additional added dimension index.
    \item Integer $n$ will be the state dimension.
    \item Integer $m$ will be the number of  measurements.
    \item Integer $N$ will be the added dimensions.  When we mix two basis, the dimension of the first (conjunctive logstic functions) will be $N_L$ and the dimension of the second (conjunctive radial basis functions) will be $N_R$.  This means that $N_L + N_R = N$.
    \item The $m\times N$ real-valued matrix, $w$, will be a matrix of weights. In our analysis it corresponds to a block of the Koopman Operator approximation matrix, $K$, which describes the flow of the state variables as a linear combination of nonlinear observables.
\end{enumerate}

\section{Proofs of Theorems}\label{sec:AppendixProofs}

\textit{Lemma} 1:
If $||\mu_l-\mu_k||_\infty\neq 0$ or $\mu_{li}\neq \mu_{ki}$ for all $i\in\{1,2,...,m\}$ and all $k\in\{N_L+1,N_L+2,...,N\}$, then as $\alpha\rightarrow\infty$, $P(y;\theta_l)P(y;\theta_k)\rightarrow 0$ exponentially.

\begin{proof}
As we consider the approximation error of a product of two conjunctive RBFs with zero our error is the term: \begin{equation}\label{eq:PP_errorterm}
\begin{aligned}
P(y;&\theta_l)P(y;\theta_k) = \\&\prod_{i=1}^m\frac{ e^{-\alpha_{li}(y_i-\mu_{li})} e^{-\alpha_{ki}(y_i-\mu_{ki})}}{ (1+e^{-\alpha_{li}(y_i-\mu_{li})})^2(1+e^{-\alpha_{ki}(y_i-\mu_{ki})})^2 }.
\end{aligned}
\end{equation}

We proceed by looking at the $i^{th}$ term in  Eq. (\ref{eq:PP_errorterm}), we know how it develops as $\alpha\rightarrow\infty$ by examining the cases tabulated below.  As in the proof of Theorem \ref{thm:augSILLconv2} we note that when we use ``$\infty_c$'' we mean to say that the term grows to infinity with $\alpha$ with a rate of $e^{c\alpha}$, for some positive constant $c$. Furthermore, we use ``$0_c$'' to mean that the term goes to zero with a rate of $e^{-c\alpha}$ for some positive constant $c$.
\begin{center}
  \begin{tabular}{| l | c | c | c |}
  \hline
    \textbf{Cases} &$y_i<\mu_{li}$ & $y_i=\mu_{li}$ & $y_i>\mu_{li}$ \\ \hline
    $y_i<\mu_{ki}$ & $\frac{\infty_{c_l}\infty_{c_k}}{\infty_{c_l}^2\infty_{c_k}^2}\rightarrow 0$ & $\frac{\infty_{c_k}}{\infty_{c_k}^2}\rightarrow 0$ & $\beta_1$ \\ \hline
    $y_i=\mu_{ki}$ & $\frac{\infty_{c_l}}{\infty_{c_l}^2}\rightarrow 0$ & $\frac{1}{16}$ & $\frac{0_{c_l}}{2}\rightarrow 0$ \\
    \hline
    $y_i>\mu_{ki}$ & $\beta_2$&$\frac{0_{c_k}}{2}\rightarrow 0$&$\frac{0_{c_l}0_{c_k}}{1}\rightarrow 0$\\ \hline
  \end{tabular}
\end{center}

We first comment that the case in the center will occur in at most $m-1$ of the terms of Eq. (\ref{eq:PP_errorterm}) by our assumption that $||\mu_l-\mu_k||_\infty\neq 0$.  We then note that given $\beta_1$ we have three cases: 

\textbf{Case 1}, $-\alpha_{li}(y_i-\mu_{li})-\alpha_{ki}(y_i-\mu_{ki}) < 0$.  In this case we have that as $\alpha\rightarrow\infty$ that our term goes to $\frac{0_{c_l-c_k}}{\infty_{c_k}^2} \rightarrow 0.$

\textbf{Case 2}, $-\alpha_{li}(y_i-\mu_{li})-\alpha_{ki}(y_i-\mu_{ki}) = 0$. In this case we have that as $\alpha\rightarrow\infty$ that our term goes to $\frac{1}{\infty_{c_k}^2} \rightarrow 0.$

\textbf{Case 3}, $-\alpha_{li}(y_i-\mu_{li})-\alpha_{ki}(y_i-\mu_{ki}) > 0$. In this case we have that as $\alpha\rightarrow\infty$ that our term goes to $\frac{\infty_{c_k-c_l}}{\infty_{c_k}^2} \rightarrow 0.$ This fraction approaches zero exponentially as the constant, $c_l$, is positive by the assumption that $y_i>\mu_{li}$. 

We now note that without loss of generality that these three cases cover $\beta_2$ (one simply swaps $\alpha$ terms) and so we have that as $\alpha\rightarrow \infty$ at least one term in our product will go to zero exponentially and the rest at most will go to $\frac{1}{16}$.  Thus the error term goes to zero exponentially.
$\blacksquare$\end{proof}

\textit{Theorem} \ref{thm:augSILLconv4}:
If $y_i\neq \mu_{ki}$ for all $i\in\{1,2,...,m\}$ and $k\in\{N_L+1,N_L+2,...,N\}$, then as $\alpha\rightarrow\infty$, $P(y;\theta_l)P(y;\theta_k)\rightarrow 0$ exponentially.

\begin{proof}
This proof is very similar to the proof of Lemma 1. The only difference is that from the table of cases, only the four corner cases can occur.
$\blacksquare$\end{proof}

Assume that our augSILL observables span the vector field we seek to model. Define $F_i(y)$ to be the $i^{th}$ entry of the vector field, $F$. Then $F_i(y)$ may be written as the following weighted sum:

\be F_i(y) = \sum_{j=1}^{N_L}w_{ij}\Lambda(y;\theta_j) + \sum_{k=N_L+1}^{N_L+N_R} w_{ik} P(y;\theta_k).\ee 

It is reasonable to assume that $F_i(y)$ can be written as such as sum as we assume $f$ and $y$ to be analytic, which implies that $F$ is smooth. This means that a sufficiently rich basis of logistic and RBFs will approximate each $F_i$ in such a manner. In the real Koopman learning problem  we would also have a constant and weighted sum of the measurements to help approximate each $F_i$. Using these extra terms to approximate $F_i$ in our analysis makes the math more cumbersome, and so we represent $F_i$ without them. The class of representable vector fields, $F$s, is extremely broad as both logistic and RBFs are universal function approximators \cite{barron1993universal}, \cite{buhmann2003radial}.

\textit{Corollary}
\ref{cor:logApprox}:
Under the assumptions of Theorem 1 in \cite{johnson2022SILL} and the assumption that the augSILL observables span the vector
field we seek to model, the Lie derivative of a conjunctive logistic function exponentially approaches a nonlinear combination of augSILL functions, specifically,
\be
\dot{\Lambda}(y&;\theta_l) \rightarrow \sum_{i=1}^{m}\sum_{j=1}^{N_L} \alpha_{li}w_{ij}(1 - \lambda(y_i;\theta_{li})) \Lambda(y;\theta^*)  \\& +\sum_{i=1}^{m}\sum_{k=N_L+1}^{N_L+N_R} \alpha_{li}w_{ik}(1 - \lambda(y_i;\theta_{li})) H(y;\theta_{l},\theta_{k})
\ee exponentially as $\alpha\rightarrow\infty$.

\begin{proof}
The derivative of a conjunctive logistic function, $\Lambda(y;\theta_l)$, with respect to time is 
\be
\dot{\Lambda}(y;\theta_l) = \left(\nabla_y\Lambda(y;\theta_l)\right)^T\frac{dy}{dt} = \left(\nabla_y \Lambda(y;\theta_l)\right)^T F(y), 
\ee
where the $i^{\text{th}}$ term of the gradient of $\Lambda(y;\theta_l)$ is 
\be
\left[\nabla_y \Lambda(y;\theta_l) \right]_i &=\alpha_{li}(\lambda(y_i;\theta_{li}) - \lambda(y_i;\theta_{li})^2) \frac{\Lambda(y;\theta_l)}{\lambda(y_i;\theta_{li})} \\
&= \alpha_{li}(1 - \lambda(y_i;\theta_{li})) \Lambda(y;\theta_l).
\ee

Thus the time-derivative of $\Lambda(y_i;\theta_l)$ is
\be\label{eq:augSILL_logPrime}
\dot{\Lambda}&(y;\theta_l) = \sum_{i=1}^{m} \alpha_{li}(1 - \lambda(y_i;\theta_{li})) \Lambda(y;\theta_l) F_i(y)\\ 
&= \sum_{i=1}^{m} \alpha_{li}(1 - \lambda(y_i;\theta_{li})) \Lambda(y;\theta_{l}) (\sum_{j=1}^{N_L}w_{ij}  \Lambda(y;\theta_{j})  \\ &  + \sum_{k=N_L+1}^{N_L+N_R} w_{ik}P(y;\theta_{k}))\\ 
& = \sum_{i=1}^{m}\sum_{j=1}^{N_L} \alpha_{li}w_{ij}(1 - \lambda(y_i;\theta_{li})) \Lambda(y;\theta_{l})\Lambda(y;\theta_{j})  \\& +\sum_{i=1}^{m}\sum_{k=N_L+1}^{N_L+N_R} \alpha_{li}w_{ik}(1 - \lambda(y_i;\theta_{li})) \Lambda(y;\theta_{l})P(y;\theta_{k}).
\ee

From Theorem 1 in \cite{johnson2022SILL}, Theorem \ref{thm:augSILLconv1} and Theorem \ref{thm:augSILLconv2} we have that Eq. (\ref{eq:augSILL_logPrime}) goes to Eq. (\ref{eq:augSILL_limApproxLog}) as $\alpha\rightarrow\infty$.
$\blacksquare$\end{proof}

\textit{Corollary}
\ref{cor:rbfApprox}:
 Under the assumptions of Theorem 1 in \cite{johnson2022SILL} and the assumption that the augSILL observables span
the vector field we seek to model, the Lie derivative of a conjunctive RBF exponentially approaches a nonlinear combination of conjunctive RBFs, specifically,
 \be 
\dot{P}(y;\theta_l)& \rightarrow  \sum_{i=1}^{m}\sum_{j=1}^{N_L} \alpha_{li}w_{ij}(1 - 2\lambda(y_i;\theta_{li})) H(y;\theta_{j},\theta_{l}) 
\ee exponentially as $\alpha\rightarrow \infty$.
 
\begin{proof}
The derivative of a conjunctive RBF is: 
\be
\dot{P}(y;\theta_l) = \left(\nabla_yP(y;\theta_l)\right)^T\frac{dy}{dt} = \left(\nabla_y P(y;\theta_l)\right)^T F(y) 
\ee
where the $i^{\text{th}}$ term of the gradient of $P(y;\theta_l)$ is 
\be
\left[\nabla_y P(y;\theta_l) \right]_i &=\alpha_{li}\rho(y_i;\theta_{li})(1 - 2\lambda(y_i;\theta_{li})) \frac{P(y;\theta_l)}{\rho(y_i;\theta_{li})} \\
&= \alpha_{li}(1 - 2\lambda(y_i;\theta_{li})) P(y;\theta_l).
\ee

So, the time-derivative of $P(y;\theta_l)$ is
\be\label{eq:augSILL_rbfPrime}
\dot{P}&(y;\theta_l) = \sum_{i=1}^{m} \alpha_{li}(1 - 2\lambda(y_i;\theta_{li})) P(y;\theta_l) F_i(y)\\ 
&= \sum_{i=1}^{m} \alpha_{li}(1 - 2\lambda(y_i;\theta_{li})) P(y;\theta_l)(\sum_{j=1}^{N_L}w_{ij}  \Lambda(y;\theta_{j})  \\ &  + \sum_{k=N_L+1}^{N_L+N_R} w_{ik}P(y;\theta_{k})) \\
& = \sum_{i=1}^{m}\sum_{j=1}^{N_L} \alpha_{li}w_{ij}(1 - 2\lambda(y_i;\theta_{li})) P(y;\theta_{l})\Lambda(y;\theta_{j})  \\ +&\sum_{i=1}^{m}\sum_{k=N_L+1}^{N_L+N_R} \alpha_{li}w_{ik}(1 - 2\lambda(y_i;\theta_{li})) P(y;\theta_{l})P(y;\theta_{k}).
\ee

From Theorems \ref{thm:augSILLconv1}, \ref{thm:augSILLconv2} and \ref{thm:augSILLconv4} we have that Eq. (\ref{eq:augSILL_rbfPrime}) goes to Eq. (\ref{eq:augSILL_limApproxRbf}) as $\alpha\rightarrow\infty$.
$\blacksquare$\end{proof}

\section{Dynamic Systems for Numeric Simulations}\label{sec:AppendixSystems}
The systems we test in Section \ref{sec:numerical} are: \begin{enumerate}
    \item the Van Der Pol Oscillator: \be\label{eq:vanDerPol}
    \dot x_1 &= x_2 \\
    \dot x_2&= -x_1 + c_1(1-x_1^2)x_2,
    \ee 
    \item the Duffing Oscillator: \be\label{eq:duff} 
    \dot x_1 &= x_2 \\
    \dot x_2&= -c_2 x_2 - c_3 x_1 - c_4 x_1^3,
    \ee 
    \item the Predator-Prey System: \be\label{eq:pred}
    \dot x_1 &= c_5x_1-c_6x_1x_2 \\
    \dot x_2&= c_7 x_1x_2 - c_8 x_2,
    \ee 
    \item and the Toggle Switch: \be\label{eq:togg}
    \dot x_1 &= \frac{c_9}{1+x_2^{c_{11}}}-c_{13} x_1 \\
    \dot x_2&= \frac{c_{10}}{1 + x_1^{c_{12}}}-c_{13} x_2,
    \ee 
\end{enumerate} where $c_1=1, c_2=0, c_3=-1, c_4=1, c_5=1.1, c_6=0.5, c_7=0.1, c_8=0.2, c_9=2.5, c_{10}=1.5, c_{11}=1.4, c_{12}=1.1,$ and $c_{13}=0.25$.

\section{Tabulated Summary of Error Bounds for Demonstrating Uniform Finite Approximate Closure (Approximate Subspace Invariance)}
We summarize the error bounds described in Section \ref{sec:augSILL_error} on Table \ref{tab:ovarallSummary}. Please note the following.

Rows 1 and 2 with rows 5 and 6 gives a bound on the augSILL dictionary's ability to approximate the Lie derivatives of each nonlinear augSILL dictionary function.  Rows 3 and 4 with rows 7 and 8 also give their own bound. The bounds' constant upper limit implies uniform finite approximate closure, and under reasonable assumptions each error bound exponentially approaches zero. 

This is under the assumption that the Lie derivatives of the linear terms in the augSILL dictionary are in the subspace defined by the augSILL dictionary. Note that the Lie derivative of the constant term in the augSILL dictionary is trivially in the subspace of the dictionary functions. Also note that, $\varepsilon_{\Lambda_l\Lambda_j}$ and $M$ in the third column are as defined in Eq. (19) in Part 1 \cite{johnson2022SILL}.

%\begin{landscape}
\begin{table*}[b]
    \centering
    \begin{tabular}{|c|c|c|c|c|c|}
         \hline
         \textbf{We}&&\textbf{The error}&\textbf{Goes to}&& \\ \textbf{approx.}&\textbf{With}& \textbf{bound is}& \textbf{zero as}& \textbf{When}&\textbf{See}  \\ \hline
         
         $\dot\Lambda_l$ &Eq. (\ref{eq:augSILL_limApproxLog})&\(\begin{aligned}\max_{y\in M}\sum_{i=1}^{m}\sum_{j=1}^{N_L} \alpha_{li}w_{ij}(1 - \lambda(y_i;\theta_{li})) \varepsilon_{\Lambda_l\Lambda_j}(y)   \\ +\sum_{i=1}^{m}\sum_{k=N_L+1}^{N_L+N_R} \alpha_{li}w_{ik}(1 - \lambda(y_i;\theta_{li})) \varepsilon_{\Lambda_lP_k}(y)\end{aligned}\) &$\alpha\rightarrow\infty$&$y\in\R^m-M_{\bar\Lambda,\bar P}$&Corollary 1 \\ 
         \hline
         
         $\dot P_l$ &Eq. (\ref{eq:augSILL_limApproxRbf})&\(\begin{aligned}\max_{y\in M}\sum_{i=1}^{m}\sum_{j=1}^{N_L} \alpha_{li}w_{ij}(1 - 2\lambda(y_i;\theta_{li})) \varepsilon_{P_l\Lambda_j}(y)   \\ +\sum_{i=1}^{m}\sum_{k=N_L+1}^{N_L+N_R} \alpha_{li}w_{ik}(1 - 2\lambda(y_i;\theta_{li})) \varepsilon_{P_lP_k}(y)\end{aligned}\) &$\alpha\rightarrow\infty$&$y\in\R^m-M_{\bar\Lambda,\bar P}$&Corollary 2 \\ 
         \hline
         
         Eq. (\ref{eq:cor3pre})&Eq. (\ref{eq:cor3post})&\(\begin{aligned}\max_{y\in M}\sum_{i=1}^{m}\sum_{j=1}^{N_L} \alpha_{li}w_{ij} \varepsilon_{\Lambda_l\Lambda_j}(y)   \\ +\sum_{i=1}^{m}\sum_{k=N_L+1}^{N_L+N_R} \alpha_{li}w_{ik} \varepsilon_{\Lambda_lP_k}(y)\end{aligned}\) &$\alpha\rightarrow\infty$&$y\in\R^m-M_{\bar\Lambda,\bar P}$&Corollary 3 \\ 
         \hline
         
         Eq. (\ref{eq:cor4pre})&Eq. (\ref{eq:cor4post})&\(\begin{aligned}\max_{y\in M}\sum_{i=1}^{m}\sum_{j=1}^{N_L} \alpha_{li}w_{ij} \varepsilon_{P_l\Lambda_j}(y)   \\ +\sum_{i=1}^{m}\sum_{k=N_L+1}^{N_L+N_R} \alpha_{li}w_{ik} \varepsilon_{P_lP_k}(y)\end{aligned}\) &$\alpha\rightarrow\infty$&$y\in\R^m-M_{\bar\Lambda,\bar P}$&Corollary 4 \\ 
         \hline
         
         Eq. (\ref{eq:augSILL_limApproxLog})&T1: row 1, col 3&\(\begin{aligned}&\sum_{i=1}^{m}\sum_{j=1}^{N_L}\frac{\nu_{ij}}{2^{m+1}} + \sum_{i=1}^{m} \sum_{k=N_L+1}^{N_L+N_R} \frac{\nu_{ik}}{2^{3m+1}}\end{aligned}\) &$m\rightarrow\infty$&$|\theta|, |K|<c$&T1: row 1 \\ 
         \hline
         
         Eq. (\ref{eq:augSILL_logPrime})&T1: row 2, col 3&\(\begin{aligned}&\sum_{i=1}^{m}\sum_{j=1}^{N_L}\frac{\nu_{ij}}{2^{2m+1}} + \sum_{i=1}^{m}\sum_{k=N_L+1}^{N_L+N_R}\frac{\nu_{ik}}{2^{3m+1}}\end{aligned}\) &$m\rightarrow\infty$&$|\theta|, |K|<c$&T1: row 2 \\ 
         \hline
         
         Eq. (\ref{eq:augSILL_limApproxRbf})&T1: row 3, col 3&\(\begin{aligned}\sum_{i=1}^{m}\sum_{j=1}^{N_L}\frac{\nu_{ij}}{2^{3m+1}}\end{aligned}\) &$m\rightarrow\infty$&$|\theta|, |K|<c$&T1: row 3 \\ 
         \hline
         
         Eq. (\ref{eq:augSILL_rbfPrime})&T1: row 4, col 3&\(\begin{aligned}&\sum_{i=1}^{m}\sum_{j=1}^{N_L}\frac{\nu_{ij}}{2^{3m+1}} + \sum_{i=1}^{m}\sum_{k=N_L+1}^{N_L+N_R}\frac{\nu_{ik}}{2^{4m+1}}\end{aligned}\)&$m\rightarrow\infty$&$|\theta|, |K|<c$&T1: row 4 \\ 
         \hline
    \end{tabular}
    \caption{A summary of the error bounds described in this paper to demonstrate uniform finite approximate closure. The variable, $c$, is a bounding constant. T1 is short for Table 1.}
    \label{tab:ovarallSummary}
\end{table*}
%\end{landscape}

\end{document}